\documentclass[lettersize, journal]{IEEEtran}
\usepackage{amssymb}
\usepackage{amsmath}
\usepackage{cite}
\usepackage{url}
\usepackage{stfloats}
\usepackage{booktabs}
\usepackage{xcolor}
\usepackage{cite,graphicx,amsmath,amssymb}
\usepackage{subfigure}
\usepackage{fancyhdr}

\usepackage{mdwmath}
\usepackage{mdwtab}
\usepackage{caption}
\usepackage{amsthm}
\usepackage{setspace}
\usepackage{algorithm}
\usepackage{algorithmic}
\usepackage{makecell}
\usepackage{diagbox}
\usepackage{amsmath}

\newtheorem{theorem}{Theorem}

\newtheorem{lemma}{Lemma}

\newtheorem{corollary}{Corollary}

\allowdisplaybreaks
\makeatletter
\def\ScaleIfNeeded{%
	\ifdim\Gin@nat@width>\linewidth \linewidth \else \Gin@nat@width
	\fi } \makeatother
 
\begin{document}
	\title{Movable-Element STARS-Assisted NOMA Communications}
	
	\author{Jingjing Zhao,~\IEEEmembership{Senior Member,~IEEE}, Yiheng Zhang, Kaiquan Cai, Xidong Mu, and Yuanwei Liu,~\IEEEmembership{Fellow,~IEEE}
		
		\thanks{J. Zhao, Y. Zhang, K. Cai are with the School of Electronics and Information Engineering, Beihang University, 100191, Beijing, China, and also with the State Key Laboratory of CNS/ATM, 100191, Beijing, China. (e-mail:\{jingjingzhao, zhangyiheng, ckq\}@buaa.edu.cn). 
			
			X. Mu is with the Centre for Wireless Innovation (CWI), Queen's University Belfast, Belfast, BT3 9DT, U.K. (e-mail: x.mu@qub.ac.uk). 
			
			Y. Liu is with the Department of Electrical and Electronic Engineering, the University of Hong Kong, Hong Kong, China (e-mail: yuanwei@hku.hk). }}
	
	\maketitle
	
	\vspace{-1.5cm}
	\begin{abstract}
		This paper proposes a movable-element (ME) enabled simultaneously transmitting and reflecting surface (ME-STARS)-assisted non-orthogonal multiple access (NOMA) communication system, where the positions of STARS elements can be flexibly reconfigured to introduce additional spatial degrees-of-freedom. Our objective is to maximize the sum achievable rate by jointly optimizing the active beamforming at the base station (BS) as well as the passive beamforming and MEs positions at the STARS. To tackle the resultant highly-coupled non-convex optimization problem, an alternating optimization-based algorithm is proposed, in which the original problem is decomposed into three subproblems. For the BS and ME-STARS beamforming subproblems, the successive convex approximation method is  employed. For the MEs positions optimization, penalty terms are incorporated into the objective function to handle the constraints, thereby transforming the original problem into an unconstrained formulation. The gradient descent algorithm is subsequently applied to iteratively approach the sub-optimal solution. Numerical results demonstrate that: i) ME-STARS-assisted NOMA communications achieve superior performance compared to the NOMA communications assisted by STARS with fixed-position elements; and ii) the performance gain brought by the MEs is significant even with restricted moving region size.
		
	\end{abstract}
	
	\begin{IEEEkeywords}
		Beamforming, movable elements, non-orthogonal multiple
		access, simultaneously
		transmitting and reflecting surface.
	\end{IEEEkeywords}
	\section{Introduction}
	The exponential growth in data traffic over wireless communication systems has resulted in a pronounced scarcity of available spectral resources, prompting extensive research for improving the spectral efficiency (SE). Reconfigurable intelligent surfaces (RISs) have garnered significant research interest owing to the unique capability to enhance SE by dynamically reconfiguring wireless propagation environments~\cite{RIS1, RIS2}. Specifically, the fundamental architecture of RISs consists of a planar surface populated with numerous low-cost passive elements designed to dynamically reconfigure both the amplitudes and phase shifts of incoming electromagnetic waves. Traditional reflective-only RISs require both the transmitter and receiver to be co-located within the same hemispherical region relative to the RISs, which imposes constraints on the adaptability of practical application scenarios. To address this coverage constraint, the simultaneously transmitting and reflecting surface (STARS) has been developed as an effective solution through synchronous manipulation of incoming electromagnetic waves for both transmission and reflection to achieve full-space coverage~\cite{STARS1, STARS2}.
	
	Note that current RIS/STARS are predominantly deployed with fixed position antennas (FPAs), which inherently imposes restrictions on their ability to exploit continuous spatial channel variations. The antenna selection (AS) technique~\cite{AS} can exploit spatial degrees-of-freedom (DoFs) for the FPA systems to a certain extent via choosing the antennas providing better channel conditions. However, with the increment of the utilization of spatial diversity, the scale of antenna arrays becomes massive, leading to a significant increase in hardware expenditures. To overcome this limitation, recent studies have explored position-adjustable antennas (PAA) technologies, including fluid antenna (FA)~\cite{FA1,FA2} and movable antenna (MA)~\cite{MA1,MA2}, to achieve enhanced system performance through dynamic spatial optimization at low cost. Specifically, the positions of antennas can be dynamically optimized within a spatially constrained region spanning multiple wavelengths, enabling precise channel condition enhancement. Building upon these demonstrated advantages, recent research efforts have progressively shifted focus toward investigating the movable-element (ME) enabled RIS (ME-RIS) communications~\cite{ME-RIS2} and ME enabled STARS (ME-STARS) communications~\cite{ME-STAR2}. 
	
	Meanwhile, the non-orthogonal multiple access (NOMA) has emerged for enabling users to share the same resource blocks, thereby supporting massive connectivity~\cite{NOMA1,NOMA2}. With superposition coding (SC) and successive interference cancellation (SIC), the intra-channel interference can be effectively mitigated by distinguishing users' signals in the power domain~\cite{NOMA3}. Since the SIC decoding order highly depends on the channel conditions of users, the ME-STARS can improve the design flexibility of NOMA networks with reconfigured channel qualities by adjusting both the transmission/reflection coefficients and MEs positions. Thus, the ME-STARS-assisted NOMA scheme constitutes a foundational framework for the evolution of wireless communications toward 6G and beyond.
	
	\subsection{Related Works}
	\subsubsection{Studies on RIS/STARS-assisted NOMA communications}
	Existing researches have explored RIS-assisted NOMA systems to leverage reconfigurable propagation environments for channel quality enhancement. In~\cite{RIS-NOMA3}, the authors investigated single-input single-output RIS-aided NOMA systems, where the phase shifts, subchannel assignment, and power allocation were jointly optimized. For multi-cell RIS-aided NOMA networks, a joint optimization framework was developed in~\cite{RIS-NOMA5}, where successive convex approximation (SCA) and swap-matching-based algorithms were employed to jointly design user pairing, subchannel assignment, transmit power allocation, RIS beamforming, and the SIC decoding order. Considering the scenario for the multiple-input-single-output communications, the authors of~\cite{RIS-NOMA6} investigated the energy-efficiency (EE) maximization problem, solving the non-convex optimization problem via an algorithmic framework comprising semidefinite programming (SDP) and SCA.  Considering the existence of eavesdropping, a physical-layer security framework was studied in~\cite{RIS-NOMA8} for RIS-aided NOMA systems, where a max-min secrecy rate optimization problem was formulated for both internal and external eavesdropping scenarios. Furthermore, a sum-rate maximization framework was proposed in~\cite{STAR-RIS-NOMA1} for STARS-assisted NOMA system, where the passive beamforming at the STARS as well as the active beamforming and power allocation at the base station (BS) were jointly optimized. In~\cite{STAR-RIS-NOMA2}, both cluster-based and beamformer-based NOMA were considered for STARS-aided communications, and the corresponding weighted sum-rate maximization problem was investigated. In~\cite{STAR-RIS-NOMA6}, the authors investigated the EE maximization problem in a STARS-assisted NOMA downlink system by jointly optimizing BS beamforming and STARS beamforming, where the fractional programming method was adopted. A fairness-enhanced integrated sensing and communication (ISAC) system leveraging STARS and NOMA was proposed in~\cite{STAR-RIS-NOMA5}, where the joint design of beamforming and STARS coefficients was conducted to ensure equitable performance across communication terminals and the sensing target. Under imperfect channel state information (CSI), the authors of~\cite{CSI-RIS-NOMA} investigated secure transmissions in the RIS-NOMA system, where a robust beamforming scheme leveraging the artificial noise was proposed to minimize the transmit power.
	
	\subsubsection{Studies on PAA-assisted communications}
	Current PAA technologies mainly include FA and MA. Specifically, the FA technology achieves dynamic adjustment of antenna positions by changing the shape or position of liquid metal or electrolyte solutions. The FA technology was first introduced in~\cite{FA1}, where the antennas were enabled to be repositioned among predefined candidate ports along a constrained linear domain to maximize the channel gain. In~\cite{FA2}, the authors investigated the FA-aided multi-user system, optimally positioning each user's antenna to exploit channel fading characteristics for concurrent interference suppression and desired signal enhancement. Different from FAs, which can only achieve one-dimensional discrete space movement due to the limitations of liquid materials, the MAs driven by motors can achieve two-dimensional (2D) continuous space movement.  In~\cite{MA1}, the authors established a pioneering field-response channel model for MA-aided communications and derived the theoretical bounds for achievable channel gain under both deterministic and stochastic channel conditions. Furthermore, for the multi-user system with single-MA users, the authors of~\cite{MA2} quantified multiuser access gain through joint optimization of antenna positioning vectors, users transmit power, and BS combiner matrices. Given the joint dependence of MIMO beam patterns on both beamforming weights and antenna positions, the authors of~\cite{MA3} leveraged the antennas positions optimization to maximize the channel capacity. In~\cite{MA4},  the authors studied a MA-assisted NOMA downlink system, where the channel capacity was maximized by jointly designing user-side MAs positions and transmit power allocations via an alternating optimization (AO) algorithm.
	
	Recently, growing attention is directed to investigate the combination of RIS/STARS and PAA technologies. In~\cite{ME-RIS2D}, the authors demonstrated that MA-enabled RISs yields better performance over BS deployed with MAs, where a Riemannian product manifold optimization framework was employed to optimize the coupled transmit beamforming and the positions of MAs. In~\cite{11033708}, to address the critical security issue of secrecy leakage in ISAC system, the authors proposed to leverage the synergistic benefits of MAs and RIS for physical layer security enhancement, where a penalty-based algorithm was developed to optimize the beamformers, RIS coefficients, and MA positions against an eavesdropping target. Moreover, considering the joint effects of total transmit power and the number of elements, the authors of~\cite{ME-RIS3} quantitatively analyzed the outage probability performance of MA-enabled RIS architectures. To further enhance the spatial DoFs at the RIS side, the authors of~\cite{ME-RIS2} deployed MEs at the RIS, where the RIS elements positions could be flexibly adjusted and a low-complexity algorithm was proposed to eliminate the phase distribution offset across channels. Moreover, a ME-STARS-assisted near-field wideband communication framework was proposed in~\cite{ME-STAR2}, where a two-layer algorithm was invoked for jointly optimizing the BS precoding as well as the passive beamforming and MEs positions to maximize the sum rate. 
	\subsection{Motivations and Contributions}
	
	Despite the aforementioned research efforts in STARS and PAA technologies, the integration of ME-STARS with NOMA communications remains an under-explored research domain. On the one hand, the joint optimization of the BS active beamforming, the ME-STARS passive beamforming, and the MEs positions is challenging due to the highly coupled variables. On the other hand, the reconfigurable propagation environment brought by adjustable transmitting/reflecting coefficients and MEs positions introduces new difficulties to the application of SIC for NOMA.
	
	To overcome the above challenge, this paper explores the ME-STARS-assisted NOMA communications and study the joint beamforming and MEs positions optimization problem with the primary objective of achieving sum rate maximization. The main contributions are summarized as follows:
	\begin{itemize}
		\item We propose a ME-STARS-assisted NOMA communication system, where MEs are deployed on the STARS to harness the inherent spatial-domain characteristics of the wireless channel. Under this framework, we formulate a joint BS beamforming, ME-STARS beamforming, and MEs positions design problem to maximize the sum achievable rate, subject to the NOMA SIC decoding constraint.
		\item We propose an AO algorithm to solve the resultant problem with highly coupled variables, where three subproblems decomposed from the original problem are alternatively solved in each iteration. The SCA technique is applied to effectively solve the BS beamforming optimization subproblem. Besides, we develop a penalty-based method to deal with the rank-one constraint in the ME-STARS beamforming subproblem. For the design of MEs positions, we adopt the penalty method and Sigmoid function to transform the constrained problem into an unconstrained one, which is followed by the gradient descent algorithm (GDA) for obtaining the suboptimal solution. 
		\item Simulation results reveal that: 1) the ME-STARS-assisted NOMA communications enables considerable sum rate improvement compared to the NOMA communications assisted by STARS with fixed-position elements (FPE-STARS); 2) the performance merit of ME-STARS remains significant even under restricted moving region size.
	\end{itemize}
	\subsection{Organization and Notations}
    Section $\text{II}$ establishes the system model and formulates the associated sum-rate maximization problem. Building on this formulation, Section $\text{III}$ develops an iterative AO-based algorithm for the BS beamforming, ME-STARS beamforming, and MEs positions optimization problems. Numerical results and comparisons with benchmark schemes are presented in Section \text{IV} to evaluate the proposed algorithm. The main conclusions are summarized in Section \text{V}.
	
	\textit{Notations}: The spaces of $N\times M$ complex- and real-valued matrices are represented by $\mathbb{C}^{N\times M}$ and $\mathbb{R}^{N\times M}$, respectively. For vector $\mathbf{s}$, $\mathbf{s}^{*}$, $\mathbf{s}^{T}$, and $\mathbf{s}^{H}$ represent its conjugate,  transpose, and conjugate transpose, respectively. The Euclidean norm of $\mathbf{s}$ and the Frobenius norm of a matrix $\mathbf{Q}$ are written as $\|\mathbf{s}\|_2$ and $\|\mathbf{Q}\|_F$, while $\operatorname{Tr}(\mathbf{Q})$ represents the trace of $\mathbf{Q}$. Moreover, $\mathbf{1}_L$ refers to the $L$-dimensional all-ones vector, and $\circ$ represents the Hadamard product.

	\section{System Model and Problem Formulation}
	In this section, the system model of the ME-STARS-assisted NOMA communication system is first introduced. Then, we formulate the joint BS beamforming, the ME-STARS beamforming, and the MEs positions optimization problem.
	\subsection{System Model}
	\begin{figure}[t]
		\centering
		\includegraphics[width=3.6in]{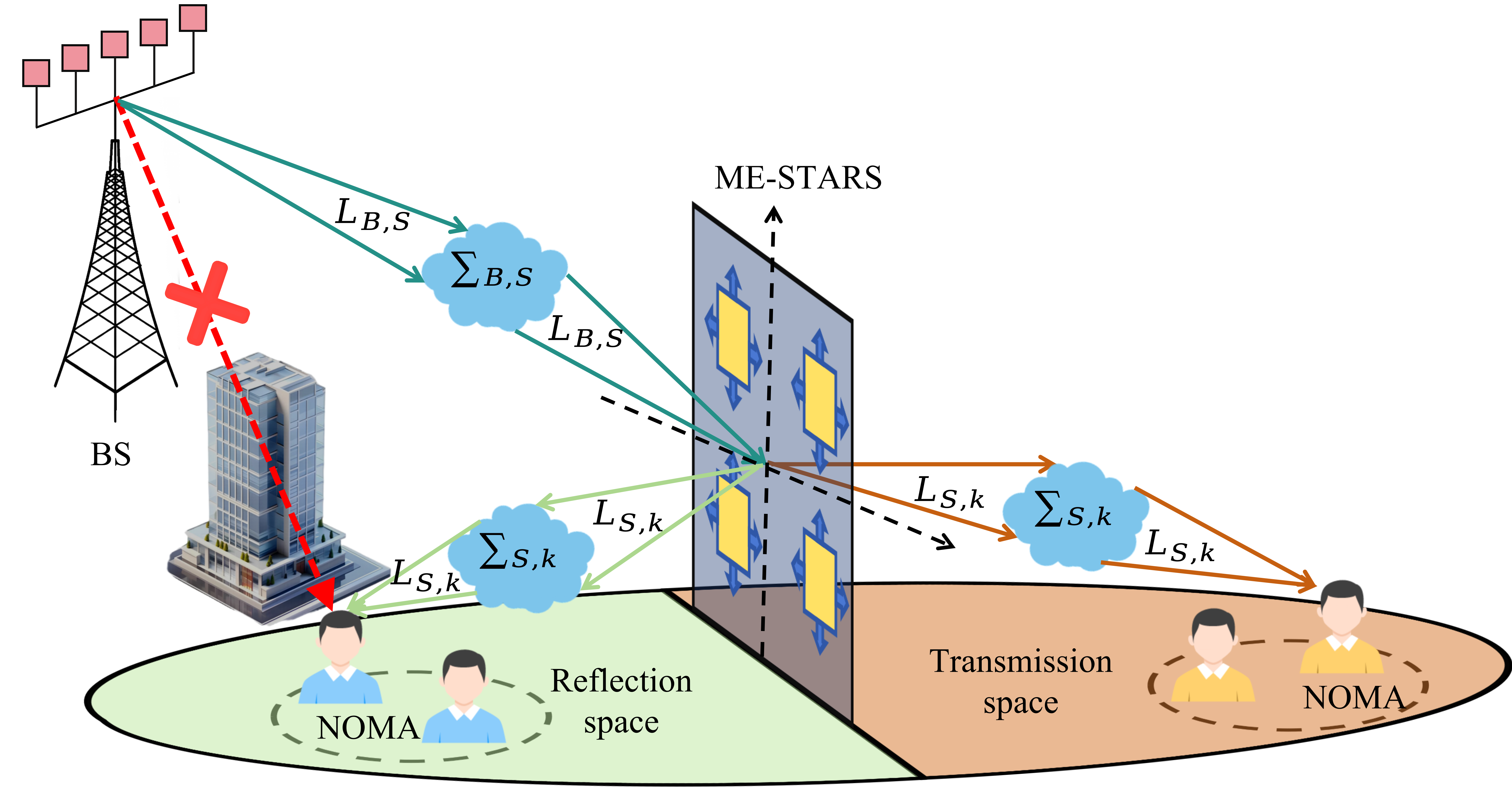}
		\caption{Illustration of the ME-STARS assisted NOMA communication system. }
		\label{System model}
	\end{figure}
	As shown in Fig. 1, we consider a downlink NOMA communication system, where $K$ users receive data transmitted from BS with $M$ antennas through signal transmissions and reflections facilitated by a ME-STARS composed of $N$ MEs\footnote{The STARS MEs can be flexibly adjusted along slide tracks using step motors \cite{motor}.}. As for users, denoted by $k\in\mathcal{K} = \left\{1, \ldots, K\right\}$, each is equipped with a single antenna. Considering a three-dimensional (3D) Cartesian coordinate system, the locations of the BS and user $k$ are represented by $\mathbf{b}_0=\left[x_b,y_b,z_b\right]$ and $\mathbf{u}_{k} = \begin{bmatrix} x_{k}, y_{k},0\end{bmatrix}^T$, respectively. Specifically, in the considered system, users are categorized into two groups according to their positions relative to the ME-STARS. The group located in the region where the ME-STARS reflects signals is denoted by the set $\mathcal{R}$, while the group located in the region where the ME-STARS transmits signals is denoted by the set $\mathcal{T}$. We assume that the STARS array has a planar shape of size $ N = N_{x} \times N_{y}$, with $\mathcal{N}=\left\{1,2,\cdots,N\right\}$ denoting the set of the MEs. 
	The position of the ME-STARS is denoted by $\mathbf{q}_{c} = \begin{bmatrix} x_c, y_c ,z_c \end{bmatrix}^T$. The local coordinate systems at the BS and STARS are denoted by $x_b-O_b-y_b$ and $x_s-O_s-y_s$, with $O_b=\left[0,0\right]^T$ and $O_s=\left[0,0\right]^T$ denoting the local original points, respectively. Then, the local coordinates for the $m\text{-}$th FPA and $n\text{-}$th ME are given by the vectors $\mathbf{s}_m=\left[x_m, y_m\right]^T$ and $\mathbf{q}_{n} = \begin{bmatrix} x_{n}, y_{n} \end{bmatrix}^T $, respectively. We further denote the collections of the coordinates of $N$ MEs by $\mathbf{Q}=\begin{bmatrix}
		\mathbf{q}_1,\mathbf{q}_2,\ldots,\mathbf{q}_N 
	\end{bmatrix}\in \mathbb{R}^{2 \times N}$. The movable region constraint of MEs is given by $\mathbf{q}_{n} \in \mathcal{C}_{\rm ME}=\left \{\begin{bmatrix} x, y\end{bmatrix}^T \mid x \in \left[0, A\right], y \in \left[0, A\right]\right\}$, with $\mathcal{C}_{\rm ME}$ indicating the feasible movement region of the MEs. It is noted that, due to the limited moving speed of MEs, the proposed ME-STARS-assisted NOMA system is particularly suitable for communication scenarios where the wireless channels vary slowly over time, such as Internet of Things.
	
	Without loss of generality, we assume that the ME-STARS operates in the energy splitting (ES) mode~\cite{proof}. 
	Specifically, the transmission- and reflection-coefficient matrices of the MEs are expressed as $\mathbf{\Theta}_t={\text{diag}}\left(\mathbf{u}_t^H\right)\in\mathbb{C}^{N \times N}$ and $\mathbf{\Theta}_r={\rm{diag}}\left(\mathbf{u}_r^H\right)\in\mathbb{C}^{N \times N}$, respectively, with $\mathbf{u}_s=\left[u^s_1, u^s_2,\cdots,u^s_N\right]^H\in\mathbb{C}^{N \times 1}$ and $u^s_n=\sqrt{\beta_n^s}e^{j\theta_{n}^s}$, $\forall s\in\{r,t\}$.
	Here, $\beta_n^t,\beta_n^r\in[0,1]$ and $\theta_n^t, \theta_n^r \in [0,2\pi), \forall n\in\mathcal{N}$ denote the amplitude and phase-shift\footnote{It is worth noting that the continuous phase-shift model is adopted to explore the maximum performance of the proposed ME-STARS-NOMA framework. 
		} coefficients of the $n$-th element, respectively. According to the energy conservation, we have $\beta_n^t+\beta_n^r=1, \forall n\in\mathcal{N}$. 

	\subsection{Signal Model}
    The direct links between the BS and users are assumed to be obstructed by the physical barriers in the dense urban environments. Both the BS-ME-STARS and the ME-STARS-users channels are dominated by line-of-sight (LoS) links, where there exist $L_{\text{B,S}}$ and $L_{\text{S},k}$ paths respectively. Let $\theta_{\text{B}}^e$ and $\varphi_{\text{B}}^e$  represent the elevation
	and azimuth angle of departure (AoD) of the $e$-th path from the BS to the ME-STARS, then the field-response vector (FRV) of the $m\text{-}$th BS antenna is expressed as
	
	\begin{equation}
		\begin{aligned}
			\mathbf{a}\left(\mathbf{s}_m\right) = \left[e^{j\frac{2\pi}{\lambda}\rho^1_{\text{B}}\left(\mathbf{s}_m\right)}, e^{j\frac{2\pi}{\lambda}\rho^2_{\text{B}}\left(\mathbf{s}_m\right)}, \dots, e^{j\frac{2\pi}{\lambda}\rho^{L_{\text{B,S}}}_{\text{B}}\left(\mathbf{s}_m\right)} \right]^{T} ,
		\end{aligned}
	\end{equation}
	where $\rho^e_{\text{B}}\left(\mathbf{s}_{m}\right) = x_m\cos \theta^e_{\text{B}} \sin \varphi^e_{\text{B}}+y_m\sin \theta^e_{\text{B}}$ is the signal propagation difference quantified by the geometrical delay between $m\text{-}$th FPA
	$\mathbf{s}_m$ and $O_b$ along the $e$-th propagation path. Thus, the field response matrix (FRM) across all the FPAs at the BS is characterized by $\mathbf{A}=\begin{bmatrix}
		\mathbf{a}\left(\mathbf{s}_1\right),\mathbf{a}\left(\mathbf{s}_2\right),...,\mathbf{a}\left(\mathbf{s}_M\right)
	\end{bmatrix} \in \mathbb{C}^{L_{\text{B,S}} \times M}$. It is noted that $\mathbf{A}$ is a constant matrix given that the BS antennas' positions are fixed. 
	
	Let  $\theta^f_{\text{S,in}}$and $\varphi^f_{\text{S,in}}$ denote the elevation
	and azimuth angle-of-arrival (AoA)
	of the $f$-th BS-ME-STARS path, respectively. The receive FRV of the $n$-th ME at the STARS can be given as follows:
	\begin{equation}
		\begin{aligned}
			\mathbf{g}_{\text{S,in}}\left(\mathbf{q}_n\right) \triangleq \left[e^{j\frac{2\pi}{\lambda}\rho^1_{\text{S,in}}\left(\mathbf{q}_{n}\right)}, e^{j\frac{2\pi}{\lambda}\rho^2_{\text{S,in}}\left(\mathbf{q}_{n}\right)}, \dots, e^{j\frac{2\pi}{\lambda}\rho^{L_{\text{B,S}}}_{\text{S,in}}\left(\mathbf{q}_{n}\right)} \right]^{T} ,
		\end{aligned}
	\end{equation}
	where $\rho^f_{\text{S,in}}\left(\mathbf{q}_{n}\right) = x_n\cos \theta^f_{\text{S,in}} \sin \varphi^f_{\text{S,in}}+y_n\sin \theta^f_{\text{S,in}}$ is the signal propagation difference between the $n\text{-}$th ME
	$\mathbf{q}_n$ and the $O_s$ in the local coordinate along the $f$-th propagation path. Then, the receive FRM at the ME-STARS is given by $\mathbf{G}_{\text{S,in}}\left(\mathbf{Q}\right)=\begin{bmatrix}
		\mathbf{g}^r\left(\mathbf{q}_1\right),\mathbf{g}^r\left(\mathbf{q}_2\right),\ldots,\mathbf{g}^r\left(\mathbf{q}_n\right) 
	\end{bmatrix}\in \mathbb{C}^{L_{\text{B,S}}\times N}$. 
	Moreover, let $\boldsymbol{\Sigma}_{\text{B,S}} \in \mathbb{C}^{L_{\text{B,S}}\times L_{\text{B,S}}}$ denote the path response matrix from BS reference point $O_b$ to ME-STARS reference point $O_s$, where $\boldsymbol{\Sigma}_{\text{B,S}}\begin{bmatrix}
		e,f
	\end{bmatrix}$ characterizes the channel response between the $e$-th transmit and the $f$-th receive paths.
	Then, the channel matrix between the BS and the ME-STARS can be written as
	\begin{equation}
		\mathbf{F}\left(\mathbf{Q}\right)=\mathbf{G}_{\text{S,in}}\left(\mathbf{Q}\right)^{H} \boldsymbol{\Sigma}_{\text{B,S}} \mathbf{A} \in \mathbb{C}^{N \times M}.
	\end{equation}
	
	For the channel from  ME-STARS to user $k$, let $\theta^g_{\text{S},k}$ and $\varphi^g_{\text{S},k}$ denote the elevation and azimuth AoD
	of the $g$-th transmit path, respectively.
	By defining $\rho^g_{\text{S},k}\left(\mathbf{q}_{n}\right)= x_{n} \cos\theta^g_{\text{S},k} \sin\varphi^g_{\text{S},k} + y_{n} \sin\theta^g_{\text{S},k}$, the transmit FRV of the $n$-th MEs for the user $k$ can be given as follows:
	\begin{equation}
		\begin{aligned}
			\mathbf{g}_k\left(\mathbf{q}_n\right) \triangleq \left[e^{j\frac{2\pi}{\lambda}\rho^1_{\text{S},k}\left(\mathbf{q}_{n}\right)}, e^{j\frac{2\pi}{\lambda}\rho^2_{\text{S},k}\left(\mathbf{q}_{n}\right)}, \dots, e^{j\frac{2\pi}{\lambda}\rho^{L_{\text{S},k}}_{\text{S},k}\left(\mathbf{q}_{n}\right)} \right]^{T}.
		\end{aligned}
	\end{equation}
	Accordingly, the transmit FRM at the ME-STARS  for the user $k$ can be derived as $\mathbf{G}_k\left(\mathbf{Q}\right)=\begin{bmatrix}
		\mathbf{g}_k\left(\mathbf{q}_1\right),\mathbf{g}_k\left(\mathbf{q}_2\right),\ldots,\mathbf{g}_k\left(\mathbf{q}_N\right) 
	\end{bmatrix}\in \mathbb{C}^{L_{\text{S},k}\times N}$.  Furthermore, the path response matrix $\boldsymbol{\Sigma}_{\text{S},k} \in L_{\text{S},k} \times L_{\text{S},k}$  captures the channel characteristics from ME-STARS reference point $O_s$ to the user $k$ $\mathbf{u}_k$. Subsequently, the channel vector for the link from the ME-STARS to the user $k$ can be written as
	\begin{equation}
		\mathbf{h}_{k}\left(\mathbf{Q}\right) = \mathbf{1}_{L_{\text{S},k}}\boldsymbol{\Sigma}_{\text{S},k}\mathbf{G}_k\left(\mathbf{Q}\right)\in \mathbb{C}^{1 \times N}, \forall k\in\mathcal{K}.
	\end{equation}
	Let $\mathbf{w}_k\in\mathbb{C}^{M \times 1}$ and $x_k$  denote the BS beamforming vector and the signal intended for user $k$ with $\mathbb{E}\left\{|x_k|^2\right\}=1$, respectively. Then, the signal received at user $k$ is given by
	\begin{equation}
		\begin{aligned}
			y_{k}={\left( \mathbf{h}_{k}\left(\mathbf{Q}\right)\mathbf \Theta_s\mathbf{F}\left(\mathbf{Q}\right)\right)\sum_{k\in\mathcal{K}}\mathbf{w}_kx_k}+n_{k}, \forall k\in\mathcal{K},
		\end{aligned}
	\end{equation}
	where $\text{s}\in\{\text{t},\text{r}\}$ indicates one of the half-spaces of the ME-STARS and $s=t$ if $k\in\mathcal{T}$, otherwise $s=r$. $n_k\sim\mathcal{CN} \left(0, \sigma^2\right)$ is the additive white Gaussian noise (AWGN) at user $k$ with noise power $\sigma^2$. To better characterize the performance potential of the considered ME-STARS-assisted NOMA system, perfect CSI is assumed to be available at the BS\footnote{Given that the perfect CSI acquisition is challenging in practice~\cite{CSI-reviewer,CSI-reviewer2,CSI-reviewer3}, the impact of imperfect CSI on the performance of ME-STARS-assisted NOMA communication system deserves further investigation in future work.}.
	
%

Let $\Omega_k\in\mathcal{K}$ denote the decoding order for user $k$. For example, $\Omega_k=k$ represents that user $k$'s signal is the $k$-th to be decoded. The NOMA system fundamentally relies on the SIC technique to enable efficient multi-user interference cancellation. Specifically, users with better channel conditions sequentially decode and subtract the signals intended for users with weaker channel conditions before decoding their own signals. Therefore, $\Omega_k<\Omega_j$ is set when user $j$ has a stronger channel power gain than user $k$. Thus, the decoding order is important for the application of SIC. A straightforward way to obtain the optimal decoding order is to exhaustively search over all possible decoding sequences and solve the corresponding optimization problem for each candidate order. Then, the order yielding the best objective value is selected. However, the number of possible decoding orders increases factorially with the number of users, i.e., $\mathcal{O}(K!)$, which leads to prohibitive computational complexity. To avoid such prohibitive complexity, we adopt a distance-based decoding order design method. Specifically, the channel power gains of users are influenced by the path response matrix $\boldsymbol{\Sigma}_k$, $\forall k \in \mathcal{K}$, which is predominantly determined by the distance between user $k$ and the ME-STARS reference point $\mathbf{q}_c$. Moreover, since the signal propagation to all users involves the BS-STARS channel part, the differences among users' channel power gains mainly come from the ME-STARS-user channels. Thus, users farther away from the ME-STARS are assigned lower decoding orders, which can be expressed as follows:
	\begin{equation}
		\Omega_j<\Omega_k, \text{if} \quad \| \mathbf{u}_k-\mathbf{q}_c\| \le \| \mathbf{u}_j-\mathbf{q}_c\|.
	\end{equation}
	With the obtained decoding order, user $k$'s achievable signal-to-interference-plus-noise ratio (SINR) is given by
	\begin{equation}
		{\rm SINR}_{k\rightarrow k}=\frac{{\left|\left( \mathbf{h}_{k}\left(\mathbf{Q}\right)\mathbf \Theta_s\mathbf{F}\left(\mathbf{Q}\right)\right)\mathbf{w}_k\right|^2}}{{\sum\limits_{\Omega_i>\Omega_k}\left|\left( \mathbf{h}_{k}\left(\mathbf{Q}\right)\mathbf \Theta_s\mathbf{F}\left(\mathbf{Q}\right)\right)\mathbf{w}_i\right|^2}+\sigma^2}.
		\label{T}
	\end{equation}
	The corresponding achievable rate at user $k$ to decode its own signal is $R_{k\rightarrow k}=\log_2\left(1+{\rm SINR}_{k\rightarrow k}\right)$. 
	Furthermore, for users $k,j\in\mathcal{K}$ with $\Omega_k<\Omega_j$, user $j$'s  SINR for decoding user $k$’s signal is 
	\begin{equation}
		{\rm SINR}_{j\rightarrow k}=\frac{{\left|\left( \mathbf{h}_{j}\left(\mathbf{Q}\right)\mathbf \Theta_s\mathbf{F}\left(\mathbf{Q}\right)\right)\mathbf{w}_k\right|^2}}{{\sum\limits_{\Omega_i>\Omega_k}\left|\left( \mathbf{h}_{j}\left(\mathbf{Q}\right)\mathbf \Theta_s\mathbf{F}\left(\mathbf{Q}\right)\right)\mathbf{w}_i\right|^2}+\sigma^2}.
		\label{T2}
	\end{equation}
	The corresponding achievable rate at user $j$ to decode user $k$'s signal is $R_{j\rightarrow k}=\log_2\left(1+{\rm SINR}_{j\rightarrow k}\right)$. Moreover, to guarantee the successful SIC at users, we define $R_k=\min\left\{R_{k\rightarrow k},R_{j\rightarrow k}\right\},  \;{\rm if}\;\Omega_k<\Omega_j, \forall k,j\in\mathcal{K}$ as the achievable rate of user $k$ that is limited to the decoding rates at user $k$ and users with higher decoding orders for user $k$'s message~\cite{NGMA}.
	
	Furthermore, to guarantee the rate fairness among users, the preferential resources should be allocated to users with weaker channel conditions~\cite{NOMA5}, which yields
	\begin{align}
		\label{fair}
		&{\left|\left( \mathbf{h}_{i}\left(\mathbf{Q}\right)\mathbf \Theta_s\mathbf{F}\left(\mathbf{Q}\right) \right)\mathbf{w}_k\right|^2}\ge{\left|\left( \mathbf{h}_{i}\left(\mathbf{Q}\right)\mathbf \Theta_s\mathbf{F}\left(\mathbf{Q}\right)\right)\mathbf{w}_j\right|^2},\notag\\
		& \;{\rm if}\;\Omega_k<\Omega_j,\forall i,k,j\in\mathcal{K}, \forall s\in\{t,r\}.
	\end{align}
	The constraint in \eqref{fair} ensures that an adequate decoding rate is attainable for users with lower decoding orders. 
	
	

	\subsection{Problem Formulation }
	This paper aims to maximize the sum achievable rate for all $K$ users through joint designing of the BS beamforming, the ME-STARS beamforming, and the MEs positions. The resulting optimization problem can be formulated as
	\begin{subequations}\label{P1}
		\begin{equation}
			\mathop{\rm{max}}\limits_{\left\{\mathbf{w}_k,\mathbf{ \Theta}_{s}, \mathbf{Q}\right\}}\quad \sum_{k=1}^K R_k,
		\end{equation}
		\begin{equation}
			\label{P1_C9}
			{\rm s.t.} \;
			R_k=\min\left\{R_{k\rightarrow k},R_{j\rightarrow k}\right\},  \;{\rm if}\;\Omega_k<\Omega_j,\forall k,j\in\mathcal{K},
		\end{equation}
		\begin{equation}
			\label{P1_C1}\sum_{k\in\mathcal{K}}\left\|\mathbf{w}_k\right\|^2\le P_{\text{B}},
		\end{equation}
		\begin{align}
			\label{P1_C2}
			&\left|\left( \mathbf{h}_{j}\left(\mathbf{Q}\right)\mathbf \Theta_s \mathbf{F}\left(\mathbf{Q}\right) \right) \mathbf{w}_k\right|^2\ge{\left|\left( \mathbf{h}_{j}\left(\mathbf{Q}\right)\mathbf \Theta_s\mathbf{F}\left(\mathbf{Q}\right)\right)\mathbf{w}_i\right|^2},\notag\\
			&\;{\rm if}\;\Omega_k<\Omega_i,\forall i,k,j\in\mathcal{K}, \forall s\in\{t,r\},
		\end{align}
		\begin{equation}
			\label{P1_C5}
			\theta_n^t, \theta_n^r\in[0,2\pi), \forall n\in\mathcal{N},
		\end{equation}
		\begin{equation}
			\label{P1_C6}
			\beta_n^t,\beta_n^r\in\left[0,1\right], \beta_n^t+\beta_n^r=1,\forall n\in\mathcal{N},\end{equation}
		\begin{equation}
			\label{P1_C7}
			\mathbf{q}_{n} \in \mathcal{C}_{\rm ME},\forall n\in\mathcal{N},
		\end{equation}
		\begin{equation}
			\label{P1_C8}
			\|\mathbf{q}_{n}-\mathbf{q}_{l}\|_2 \ge D_{\min},\forall n,l\in\mathcal{N},	n\neq l,
		\end{equation}
	\end{subequations}
	where $P_{\text{B}}$ represents the maximum transmit power allowed at the BS and $D_{\text{min}}$ is the minimum distance. Constraint \eqref{P1_C9} ensures successful user signal decoding through SIC implementation. Constraint \eqref{P1_C1}  and constraint \eqref{P1_C2} are the total transmit power constraint and rate fairness constraint. The phase shift and energy conservation constraints for each ME are given in \eqref{P1_C5} and \eqref{P1_C6}, respectively.
	Constraint \eqref{P1_C7} guarantees
	that all MEs  move within the feasible region $\mathcal{C}_{\rm ME}$. Constraint
	\eqref{P1_C8} ensures any adjacent MEs keep the minimum distance. 
	
	We further convert problem \eqref{P1} into a more tractable formulation as follows:
	\begin{subequations}
		\label{P15}
		\begin{equation}
			\mathop{\rm{max}}\limits_{\left\{\mathbf{w}_k,\mathbf{ \Theta}_{s}, \mathbf{Q},R_k^{\text{min}}\right\}}\quad \sum_{k=1}^K R_k^{\text{min}},   
		\end{equation}  
		\begin{equation}
			\label{P15_C1}{\rm s.t.} \;
			R_k^{\text{min}}\le R_{j\rightarrow k},  \;{\rm if}\;\Omega_k \leq\Omega_j, \forall k,j\in\mathcal{K},	
		\end{equation}
		\begin{equation}\label{P15_C2}\eqref{P1_C1}-\eqref{P1_C8}.
		\end{equation}
	\end{subequations}
	Problem \eqref{P15} is a non-convex problem with  highly-coupled variables, which makes it challenging to resolve. Note that, the newly introduced MEs positions need to be jointly optimized with the BS and STARS beamforming, which brings in additional complexity. In the following sections, we will develop an AO algorithm for effectively solving problem \eqref{P15}.
	\section{Proposed Solution}
	In this section, we invoke the AO framework to decouple the problem \eqref{P15} into three subproblems, i.e., BS beamforming, ME-STARS beamforming, and MEs positions optimization. The three subproblems are alternatively solved in an iterative procedure until the sum achievable rate stabilizes according to predefined criteria. We then delineate the general algorithmic procedure and conduct a theoretical analysis to investigate its convergence performance and computational complexity.
	\subsection{BS Beamforming Optimization}
	\vspace{-5pt}
	
	Given MEs positions $\left\{\mathbf{Q}\right\}$ and ME-STARS beamforming $\left\{\mathbf{\Theta_s}\right\}$, let $\mathbf{\overline{h}}_k= \mathbf{h}_{k}\left(\mathbf{Q}\right)\mathbf \Theta_s\mathbf{F}\left(\mathbf{Q}\right)$ denote the combined channel of the BS-STARS-user link for user $k$. To convert the initial fractional-form problem into a more tractable one, a
	slack variable set $\left\{M_{jk},N_{jk}| \forall j,k\in \mathcal{K}\right\}$ is introduced, where $M_{jk}$ and $N_{jk}$ are defined as
	\begin{equation}
		\frac{1}{M_{jk}}=\left|\mathbf{\overline{h}}_j\mathbf{w}_k\right|^2 ,\forall j,k\in \mathcal{K},
		\label{A.1}
	\end{equation}
	\begin{equation}
		N_{jk}={\sum\limits_{\Omega_i>\Omega_k}\left|\mathbf{\overline{h}}_j\mathbf{w}_i\right|^2}+\sigma^2,\forall j,k\in \mathcal{K}.
		\label{A.2}
	\end{equation}
	Substituting \eqref{A.1} and \eqref{A.2} into \eqref{T2}, the achievable rate
	can be reformulated as
	\begin{align}
		R_{j\rightarrow k}=&\log_2\left(1+\frac{1}{M_{jk}N_{jk}}\right),\notag \\
		&\;{\rm if}\;\Omega_k \leq\Omega_j, \forall j,k\in \mathcal{K}.
		\label{A.3}
	\end{align}
	We further define $\mathbf{H}_k=\mathbf{\overline{h}}_k^H\mathbf{\overline{h}}_k$ and $\mathbf{W}_k=\mathbf{w}_k\mathbf{w}_k^H$, where $\mathbf{W}_k \succeq 0$ and $\text{rank}\left(\mathbf{W}_k\right)=1$. It follows that $\left|\mathbf{\overline{h}}_k\mathbf{w}_k\right|^2= \rm{Tr}\left(\mathbf{W}_k\mathbf{H}_k\right)$. Thus, the BS beamforming problem can be equivalently rewritten as
	\begin{subequations}\label{P3}
		\begin{equation}
			\mathop{\rm{max}}\limits_{\left\{\mathbf{W}_k,M_{jk},N_{jk},R_k^{\text{min}}\right\}}\quad \sum_{k=1}^K R_k^{\text{min}},
		\end{equation}
		\begin{align}
			\label{P3_C1} {\rm s.t.} \;\log_2&\left(1+\frac{1}{M_{jk}N_{jk}}\right) \geq R_k^{\text{min}},\notag\\ & \;{\rm if}\;\Omega_k \leq\Omega_j, \forall j,k \in \mathcal{K},
		\end{align}
		\begin{equation}
			\label{P3_C2} \frac{1}{M_{jk}} \leq \text{Tr}\left(\mathbf{W}_k\mathbf{H}_j\right),\forall j,k \in \mathcal{K}, 
		\end{equation}
		\begin{equation}
			\label{P3_C3} N_{jk}\geq {\sum\limits_{\Omega_i>\Omega_k}}\text{Tr}\left(\mathbf{W}_i\mathbf{H}_j\right)+\sigma^2,\forall j,k \in \mathcal{K},
		\end{equation}
		\begin{equation}
			\label{P3_C4}
			\sum_{k\in\mathcal{K}}\rm{Tr}\left(\mathbf{W}_k\right)\le P_{\text{B}},
		\end{equation}
		\begin{equation}
			\label{P3_C5} \mathbf{W}_k \succeq 0,
		\end{equation}
		\begin{equation}
			\label{P3_C6} \text{rank}\left(\mathbf{W}_k\right)=1,
		\end{equation}
		\begin{equation}
			\label{P3_C7}
			\text{Tr}\left(\mathbf{H}_j\mathbf{W}_k \right) \ge \text{Tr}\left(\mathbf{H}_j\mathbf{W}_i\right),\;{\rm if}\;\Omega_k<\Omega_i,\forall i,k,j\in\mathcal{K}.
		\end{equation}
	\end{subequations}
	It is noted that problem \eqref{P3} still non-convex due to constraints \eqref{P3_C1} and \eqref{P3_C6}. To handle the constraint \eqref{P3_C1}, a convex relaxation is derived by applying the first-order Taylor expansion to approximate the left-hand term of \eqref{P3_C1} as follows:
	\begin{align}
		\label{taylor}
		&\log _2\left(1+\frac{1}{M_{jk} N_{jk}}\right)  \geqslant \log _2\left(1+\frac{1}{M_{jk}^{\left(\tau_1\right)} N_{jk}^{\left(\tau_1\right)}}\right)\nonumber\\
		&-\frac{\log _2 \text{e}\left(M_{jk}-M_{jk}^{\left(\tau_1\right)}\right)}{M_{jk}^{\left(\tau_1\right)}\left(1+M_{ jk}^{\left(\tau_1\right)} N_{jk}^{\left(\tau_1\right)}\right)}-\frac{\log _2 \text{e}\left(N_{jk}-N_{ jk}^{\left(\tau_1\right)}\right)}{N_{jk}^{\left(\tau_1\right)}\left(1+M_{jk}^{\left(\tau_1\right)} N_{jk}^{\left(\tau_1\right)}\right)} \nonumber\\
		& =\widetilde{R}_{j \rightarrow k},  \;{\rm if}\;\Omega_k \leq\Omega_j, \forall j,k \in \mathcal{K},
	\end{align}
where the points $M^{\left(\tau_1\right)}_{jk}$ and $N^{\left(\tau_1\right)}_{jk}$ are given local points of $M_{jk}$ and $N_{jk}$ in the $\tau_1$-th iteration, respectively. Then, the optimization subproblem \eqref{P3} can be rewritten as
	\begin{subequations}\label{P4}
		\begin{equation}
			\mathop{\rm{max}}\limits_{\left\{\mathbf{W}_k,M_{jk},N_{jk},R_k^{\text{min}}\right\}}\quad \sum_{k=1}^K R_k^{\text{min}},
		\end{equation}
		\begin{equation}
			\label{P4_C1}{\rm s.t.} \;
			\widetilde{R}_{j \rightarrow k} \geq R_k^{\text{min}},  \;{\rm if}\;\Omega_k \leq\Omega_j,  \forall j,k \in \mathcal{K},
		\end{equation}
		\begin{equation}
			\label{P4_C2} \eqref{P3_C2}-\eqref{P3_C7}.
		\end{equation}
	\end{subequations}
	Subsequently, the rank-one constraint \eqref{P3_C6} remains the sole non-convex obstacle for solving problem \eqref{P4}.  To address this issue, we have the following theorem.
	\begin{theorem}
		The optimal solution $\mathbf{W}_k^*$ with respect to problem \eqref{P4} inherently satisfies $\text{rank}\left(\mathbf{W}_k\right)\leq1$ despite omitting the rank-one constraint \eqref{P3_C6} .
	\end{theorem}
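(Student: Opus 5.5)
The plan is the standard KKT/Lagrangian-duality argument for rank-one SDR solutions, in the spirit of \cite{proof}.

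\textbf{Setting up the Lagrangian.} Once the rank constraint \eqref{P3_C6} is dropped, problem \eqref{P4} is a convex SDP in $\{\mathbf{W}_k, M_{jk}, N_{jk}, R_k^{\min}\}$. The objective and \eqref{P4_C1} are linear in the slack variables. Constraint \eqref{P3_C2} is convex because $1/M_{jk}$ is convex for $M_{jk}>0$. The remaining constraints are affine in $\mathbf{W}_k$. Slater's condition holds by taking small scaled identities $\mathbf{W}_k=\epsilon\mathbf{I}$ with large $M_{jk}$ and $N_{jk}$, so strong duality holds and the KKT conditions characterize the optimum.

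I will introduce multipliers $\lambda_{jk}\ge0$ for \eqref{P3_C2}, $\mu_{jk}\ge0$ for \eqref{P3_C3}, $\eta\ge0$ for \eqref{P3_C4}, $\mathbf{Y}_k\succeq0$ for \eqref{P3_C5}, and $\nu_{ikj}\ge0$ for \eqref{P3_C7}. The only terms of the Lagrangian involving $\mathbf{W}_k$ are then linear. Stationarity with respect to $\mathbf{W}_k$ gives
\begin{equation*}
\mathbf{Y}_k=\eta\mathbf{I}_M-\boldsymbol{\Delta}_k,\qquad \boldsymbol{\Delta}_k=\sum_{j}\lambda_{jk}\mathbf{H}_j-\sum_{j}\sum_{k':\Omega_{k'}<\Omega_k}\mu_{jk'}\mathbf{H}_j+\sum_{j}\Big(\sum_{i:\Omega_k<\Omega_i}\nu_{kij}-\sum_{i:\Omega_i<\Omega_k}\nu_{ikj}\Big)\mathbf{H}_j,
\end{equation*}
together with complementary slackness $\mathbf{Y}_k\mathbf{W}_k=\mathbf{0}$. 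Here each $\mathbf{H}_j=\overline{\mathbf{h}}_j^H\overline{\mathbf{h}}_j$ is rank one.

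\textbf{Rank argument.} First I show $\eta>0$. If $\eta=0$, the BS power constraint would be inactive, and scaling all $\mathbf{W}_k$ up would strictly increase every SINR ratio toward the interference-limited value. In particular, the least-decoded user's rate increases strictly since it has no interference. This contradicts optimality, so the power constraint is tight and $\eta>0$. Then $\boldsymbol{\Delta}_k$ is a real combination of the $\mathbf{H}_j$, and I let $\delta_{\max}$ be its largest eigenvalue. Feasibility $\mathbf{Y}_k\succeq0$ forces $\eta\ge\delta_{\max}$.

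If $\eta>\delta_{\max}$, then $\mathbf{Y}_k\succ0$ and $\mathbf{W}_k=\mathbf{0}$, so the rank is $0$. Otherwise $\eta=\delta_{\max}$, and $\mathrm{rank}(\mathbf{Y}_k)=M-d$, where $d$ is the multiplicity of $\delta_{\max}$. Complementary slackness puts the columns of $\mathbf{W}_k$ in the null space of $\mathbf{Y}_k$, so $\mathrm{rank}(\mathbf{W}_k)\le d$.

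\textbf{Main obstacle.} The hard step is showing $d\le1$, i.e., that the top eigenvalue of $\boldsymbol{\Delta}_k$ is simple. For generic channels, with the $\overline{\mathbf{h}}_j$ drawn from a continuous distribution through the random path responses $\boldsymbol{\Sigma}$, a combination of rank-one matrices with not all coefficients equal has a simple largest eigenvalue almost surely. I would argue this directly first.

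As a fallback, if the multiplicity is larger, I will use the standard construction. Take any optimal $\mathbf{W}_k$ of rank $>1$ supported in the eigenspace of $\delta_{\max}$. Replace it by the rank-one matrix $\mathbf{w}\mathbf{w}^H$ spanning the same values $\mathrm{Tr}(\mathbf{W}_k\mathbf{H}_j)$ and $\mathrm{Tr}(\mathbf{W}_k)$. Such a $\mathbf{w}$ exists via a rank-reduction (Huang–Palomar-type) argument restricted to that subspace. Since every constraint and the objective depend on $\mathbf{W}_k$ only through these traces, the replacement is still optimal and has rank one. Either way, an optimal solution with $\mathrm{rank}(\mathbf{W}_k)\le1$ exists, which establishes the theorem.
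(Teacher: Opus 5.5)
Up to $\operatorname{rank}(\mathbf{W}_k)\le d$ your argument is the standard KKT route that the paper simply defers to \cite{proof}. It does settle the user with $\Omega_k=K$: there the only positive term of $\boldsymbol{\Delta}_k$ is $\lambda_{kk}\mathbf{H}_k$, so $\mathbf{Y}_k$ is a positive definite matrix minus a rank-one matrix and $\operatorname{rank}(\mathbf{W}_k)\le 1$. Your justification of $\eta>0$ needs repair, because $\eta=0$ does not imply that the power constraint is slack; complementary slackness only gives the converse. Instead, show that the optimal value $p(P_{\text{B}})$ of \eqref{P4} is strictly increasing, and then use $\eta\ge[p(P_{\text{B}}+u)-p(P_{\text{B}})]/u>0$. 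One way to show the increase is to add $\delta\mathbf{I}$ to the beamformer of the user with $\Omega_k=1$, which enters no interference term and appears only on the left-hand side of \eqref{P3_C7}. For every user with $\Omega_k<K$, however, $\mathbf{W}_k$ must be decodable at all $j$ with $\Omega_j\ge\Omega_k$. Hence $\boldsymbol{\Delta}_k$ contains $\sum_j\lambda_{jk}\mathbf{H}_j$ with generally several positive multipliers. The step $d\le1$ is then the whole content of the theorem, and neither of your two arguments establishes it.

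The genericity argument fails because the multipliers are not independent of the channels. They are selected by the KKT system, and that selection can force a repeated top eigenvalue on open sets of channel realizations. The $\mathbf{W}_k$-block of a user decoded by others has multicast structure, the textbook case where SDR is not tight. Take $M=2$ and four unit-norm channels with $\sum_{j=1}^{4}\mathbf{H}_j=2\mathbf{I}$ (a tetrahedral configuration). Then $\sum_j\operatorname{Tr}(\mathbf{W}\mathbf{H}_j)=2\operatorname{Tr}(\mathbf{W})$ gives $\min_j\operatorname{Tr}(\mathbf{W}\mathbf{H}_j)\le P/2$ under $\operatorname{Tr}(\mathbf{W})\le P$. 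Equality holds only at $\mathbf{W}=\frac{P}{2}\mathbf{I}$, because $\mathbf{I},\mathbf{H}_1,\mathbf{H}_2,\mathbf{H}_3$ span the $2\times2$ Hermitian matrices. So the unique optimum has rank two, every dual solution has $\sum_j\lambda_j\mathbf{H}_j=\eta\mathbf{I}$, and this persists for all nearby channels.

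Your fallback fails for the same reason. The constraints of \eqref{P4} see $\mathbf{W}_k$ only through the $K+1$ functionals $\operatorname{Tr}(\mathbf{W}_k\mathbf{H}_j)$, $j\in\mathcal{K}$, and $\operatorname{Tr}(\mathbf{W}_k)$. Huang--Palomar rank reduction only guarantees a solution of rank $r$ with $r^2\le K+1$, i.e.\ rank one only when $K\le2$. For $M=2$ and $K\ge3$ (the simulated setting), these trace values generically determine $\mathbf{W}_k$ uniquely, so a rank-two $\mathbf{W}_k$ has no rank-one replacement at all.

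What is missing is a problem-specific reason why the top eigenvalue of $\boldsymbol{\Delta}_k$ is simple for $\Omega_k<K$. Without one, your arguments prove rank one only for the last-decoded user and existence of a rank-one optimum for $K\le2$. Otherwise a rank-one extraction step is required, such as Gaussian randomization or a penalty like the one used for $\mathbf{U}_s$.
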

	\begin{proof}
		The proof is omitted for brevity. For more details, we refer the readers to~\cite{proof}.
	\end{proof}
	Theorem 1 establishes that the optimal beamforming solution
	${\mathbf{W}_k^*}$ for problem \eqref{P3} can be obtained by solving convex problem \eqref{P4} without the rank-one constraint.  This tractable formulation enables efficient computation via standard solvers  such as the semi-definite program solver in CVX tool~\cite{cvx_software}. The detailed steps of the proposed algorithm are shown in \textbf{Algorithm~\ref{alg:pro1}}.
	\begin{algorithm}
		\caption{SCA-based algorithm for solving problem \eqref{P3}}
		\label{alg:pro1}
		\begin{algorithmic}[1]
			\STATE Initialize feasible points $\left\{M_{jk}^{\left(0\right)}\right\}$, $\left\{N_{jk}^{\left(0\right)}\right\}$ and $\mathbf{W}_k^{\left(0\right)}$;
			\STATE Set the iteration index $\tau_1 = 0$;
			\REPEAT
			\STATE Update $\left\{M_{jk}^{\left(\tau_1 + 1\right)}\right\}$, $\left\{N_{jk}^{\left(\tau_1 + 1\right)}\right\}$ and $\left\{\mathbf{W}_k^{\left(\tau_1 + 1\right)}\right\}$ by solving problem \eqref{P4};
			\STATE $\tau_1 \gets \tau_1 + 1$;
			\UNTIL{the fractional increment of the objective value is less than the predefined threshold $\varepsilon_1$ or $\tau_1\geq\tau_1^{\text{max}}$.}
			\STATE \textbf{Output:} $\mathbf{W}_k^*$
		\end{algorithmic}
	\end{algorithm}
	
	\subsection{ME-STARS Beamforming Optimization}
	Given MEs positions $\{\mathbf{Q}\}$ and transmit beamforming vectors $\{\mathbf{w}_k\}$, we adopt the same slack variable set $\{M_{jk},N_{jk}| \forall j,k\in \mathcal{K}\}$ as  in the previous subsection. By denoting $\mathbf{u}_{s}=\begin{bmatrix}
		\sqrt{\beta_1^s}e^{j\theta_{1}^s},\sqrt{\beta_2^s}e^{j\theta_{2}^s},\ldots,\sqrt{\beta_N^s}e^{j\theta_{N}^s}
	\end{bmatrix} $
	and $\mathbf{Z}_j\left(\mathbf{Q}\right)=\text{diag}\left(\mathbf{h}_j\left(\mathbf{Q}\right)\right)\mathbf{F}\left(\mathbf{Q}\right)$, we have  $\left|\left( \mathbf{h}_{j}\left(\mathbf{Q}\right)\mathbf \Theta_s\mathbf{F}\left(\mathbf{Q}\right) \right)\mathbf{w}_k\right|^2={\left|\left(\mathbf{u}_{s}^H\mathbf{Z}_j\left(\mathbf{Q}\right)\right)\mathbf{w}_k\right|^2}$. We further define $\mathbf{U}_s=\mathbf{u}_s^H\mathbf{u}_s$ and $\mathbf{V}_{j,k}=\mathbf{Z}_j\left(\mathbf{Q}\right)\mathbf{w}_k\mathbf{w}_k^H\mathbf{Z}_j\left(\mathbf{Q}\right)^H$, where  $\mathbf{U}_s \succeq 0$ , $\text{rank}\left(\mathbf{U}_s\right)=1$ and $\text{diag}\left(\mathbf{U}_s\right)=\left[\beta_1^s,\beta_2^s,...,\beta_N^s\right]$.
	The ME-STARS beamforming is given by
	
	\begin{subequations}\label{P5}
		\begin{equation}
			\mathop{\rm{max}}\limits_{\left\{\mathbf{U}_s,M_{jk},N_{jk},R_k^{\text{min}}\right\}}\quad \sum_{k=1}^K R_k^{\text{min}},
		\end{equation}
		\begin{align}
			\label{P5_C1} {\rm s.t.} \;
			\log_2&\left(1+\frac{1}{M_{jk}N_{jk}}\right) \geq R_k^{\text{min}},\notag\\
			&\;{\rm if}\;\Omega_k \leq\Omega_j, \forall j,k\in \mathcal{K},
		\end{align}
		\begin{equation}
			\label{P5_C2}
			\frac{1}{M_{jk}}\leq{\text{Tr}\left(\mathbf{U}_s\mathbf{V}_{j,k}\right)},\forall j,k\in \mathcal{K},
		\end{equation}
		\begin{equation}
			\label{P5_C3} N_{jk}\geq \sum\limits_{\Omega_i>\Omega_k}{\text{Tr}\left(\mathbf{U}_s\mathbf{V}_{j,i}\right)}+\sigma^2,\forall j,k\in \mathcal{K},
		\end{equation}
		\begin{align}
			\label{P5_C4} {\text{Tr}\left(\mathbf{U}_s\mathbf{V}_{j,k}\right)} &\ge{\text{Tr}\left(\mathbf{U}_s\mathbf{V}_{j,i}\right)},\;{\rm if}\;\Omega_k<\Omega_i,\notag\\
			&
			\forall i,k,j\in\mathcal{K}, \forall s\in\{t,r\},
		\end{align}
		\begin{equation}
			\label{P5_C5} \mathbf{U}_s \succeq 0, 
		\end{equation}
		\begin{equation}
			\label{P5_C6}  \left(\mathbf{U}_s\right)\left[m,m\right]=\beta_m^s,
		\end{equation}
		\begin{equation}
			\label{P5_C7} \text{rank}\left(\mathbf{U}_s\right)=1, 
		\end{equation}
		\begin{equation}
			\label{P5_C8}         \eqref{P1_C5}, \eqref{P1_C6}.
		\end{equation}
	\end{subequations}
	It is obvious that problem \eqref{P5} is non-convex due to the constraints \eqref{P5_C1} and \eqref{P5_C7}. Constraint \eqref{P5_C1} can be approximately transformed into a convex constraint by combining 
	\eqref{taylor} as shown in the last subsection.  As for the rank-one constraint \eqref{P5_C7}, we can transform it into an equality form due to the special property of the rank-one matrix, which is given by
	\begin{equation}
		\label{equ}
		\text{Tr}\left(\mathbf{U}_s\right)-||\mathbf{U}_s||_2=0, \forall s \in \{r,t\},
	\end{equation}
	where $||\mathbf{U}_s||_2=\xi_{\text{max}}\left(\mathbf{U}_s\right)$ denotes the largest eigenvalue of $\mathbf{U}_s$. It is noted that, for any matrix $\mathbf{B}$ that is not a rank-one matrix, $\text{Tr}\left(\mathbf{B}\right)-||\mathbf{B}||_2>0$ is always satisfied. Thus, to address problem \eqref{P5}, a penalty-based method is employed, which involves converting the equality constraint \eqref{equ} into a penalty term integrated with the objective function. Denoting $\kappa>0$ as the penalty weight, the problem \eqref{P5} can be reformulated as
	\begin{subequations}\label{P6}
		\begin{align}
			&\mathop{\rm{max}}\limits_{\left\{\mathbf{U}_s,M_{jk},N_{jk},R_k^{\text{min}}\right\}}\quad \sum_{k=1}^K R_k^{\text{min}}\notag\\
			&-\kappa\sum_{s\in \{ r,t\}}\text{Tr}\left(\mathbf{U}_s\right)-||\mathbf{U}_s||_2,
		\end{align}
		\begin{equation}
			\label{P6_C1}{\rm s.t.} \;
			\widetilde{R}_{j\rightarrow k}\ge R_k^{\text{min}},   \;{\rm if}\;\Omega_k \leq\Omega_j, \forall j,k \in \mathcal{K},
		\end{equation}
		\begin{equation}
			\label{P6_C2} 
			\eqref{P1_C5},\eqref{P1_C6}
			, \eqref{P5_C2}-\eqref{P5_C6}.	
		\end{equation}
	\end{subequations}
	Subsequently, to tackle the non-convex penalty term, the first-order Taylor expansion is employed to approximate $||\mathbf{U}_s||_2$ to its linear upper bound as
	\begin{equation}
		\begin{aligned}
			\text{Tr}\left(\mathbf{U}_s\right)-||\mathbf{U}_s||_2\le \text{Tr}\left(\mathbf{U}_s\right)-||\mathbf{U}_s||^{\left(\tau_2\right)}_{e}=\varepsilon_s^{\left(\tau_2\right)},
		\end{aligned}
	\end{equation}
	where $||\mathbf{U}_s||^{\left(\tau_2\right)}_{e}=||\mathbf{U}_s^{\left(\tau_2\right)}||_2+\text{Tr}\left(\mathbf{u}_{\text{max}}^{\left(\tau_2\right)}\mathbf{u}_{\text{max}}^{\left(\tau_2\right)H}\left(\mathbf{U}_s-\mathbf{U}_s^{\left(\tau_2\right)}\right)\right)$ with $\mathbf{u}_{\text{max}}^{\left(\tau_2\right)}$ denoting the eigenvector corresponding to the largest eigenvalue of $\mathbf{U}_s^{\left(\tau_2\right)}$. Accordingly, a equivalent form of problem \eqref{P6} can be given as 
	\begin{subequations}\label{P7}
		\begin{equation}
			\mathop{\rm{max}}\limits_{\left\{\mathbf{U}_s,M_{jk},N_{jk},R_k^{\text{min}}\right\}}\quad \sum_{k=1}^K R_k^{\text{min}}-\kappa\sum_{s\in \{ r,t\}}\varepsilon_s^{\left(\tau_2\right)},
		\end{equation}
		\begin{equation}
			\label{P7_C1}{\rm s.t.} \;
			\eqref{P1_C5},\eqref{P1_C6}, \eqref{P5_C2}-\eqref{P5_C6},\eqref{P6_C1}.
		\end{equation}
	\end{subequations}
	
	It is noted that, when the penalty weight is sufficiently large, the penalty term could ensure that the obtained solution satisfies the rank-one condition. However, if the penalty weight $\kappa$ is initialized with an sufficiently large value, the objective function will become predominantly governed by the penalty term. This dominance might limit the exploration of the solution space, thereby leading to insufficient consideration of the sum achievable rate. Thus, a two-layer algorithm is proposed, where $\left\{\mathbf{U}_t, \mathbf{U}_r\right\}$ are updated by solving $\eqref{P7}$ in the inner layer. As for the penalty factor $\kappa$, it is set to be a relatively small value and iteratively increased in the outer layer with $\kappa=l_{\kappa}\kappa, l_{\kappa}>1$, until the violation of the equality constraint in \eqref{equ} gets smaller than the predetermined tolerance threshold $\epsilon$ i.e., $\text{max}\left\{\text{Tr}\left(\mathbf{U}_t\right)-||\mathbf{U}_t||_2,\text{Tr}\left(\mathbf{U}_r\right)-||\mathbf{U}_r||_2\right\}\le \epsilon$.
	
	It is obvious that the problem \eqref{P7} is a standard convex problem after transformation, which can be efficiently solved by convex problem solvers like the CVX tool~\cite{cvx_software}. Specifically, the proposed algorithm for solving ME-STARS beamforming optimization problem is shown in \textbf{Algorithm~\ref{alg:pro2}}.
	\begin{algorithm}
		\caption{Penalty-based SCA algorithm for solving problem \eqref{P5}}
		\label{alg:pro2}
		\begin{algorithmic}[1]
			\STATE Initialize feasible points $\left\{M_{jk}^{\left(0\right)}\right\}$, $\left\{N_{jk}^{\left(0\right)}\right\}$, $ \left\{\mathbf{U}_{r}^{\left(0\right)},\mathbf{U}_{t}^{\left(0\right)}\right\}$ and the penalty factor $\kappa$ ;
			\REPEAT
			\STATE Set the iteration index $\tau_2 = 0$ for inner loop;
			\REPEAT
			\STATE With the given $\left\{M_{jk}^{\left(\tau_2 \right)}\right\}$, $\left\{N_{jk}^{\left(\tau_2 \right)}\right\}$ and $ \left\{\mathbf{U}_{r}^{\left(\tau_2\right)},\mathbf{U}_{t}^{\left(\tau_2\right)}\right\}$, solve the relaxed problem \eqref{P7};
			\STATE Update $\left\{M_{jk}^{\left(\tau_2 +1\right)}\right\}$, $\left\{N_{jk}^{\left(\tau_2+1\right)}\right\}$ and $ \left\{\mathbf{U}_{r}^{\left(\tau_2+1\right)},\mathbf{U}_{t}^{\left(\tau_2+1\right)}\right\}$;
			\STATE $\tau_2 \gets \tau_2 + 1$
			\UNTIL{the fractional increment of the objective value is less than the predefined threshold $\varepsilon_1$ or $\tau_2 \geq \tau_2^{\text{max}}$. }
			\STATE Update $\mathbf{U}_{r}^{\left(0\right)}=\mathbf{U}_{r}^{\left(\tau_2\right)}, \mathbf{U}_{t}^{\left(0\right)}=\mathbf{U}_{t}^{\left(\tau_2\right)}$ and $\kappa=l_{\kappa}\kappa$;
			\UNTIL{$\text{max}\left \{\text{Tr}\left(\mathbf{U}_t\right)-||\mathbf{U}_t||_2,\text{Tr}\left(\mathbf{U}_r\right)-||\mathbf{U}_r||_2\right \}\le \epsilon$.}
			\STATE \textbf{Output:} $\mathbf{U}_t^*$, $\mathbf{U}_r^*$
		\end{algorithmic}
	\end{algorithm}
	
	\subsection{MEs Positions Optimization}
	With the given BS beamforming $\left\{\mathbf{w}_k\right\}$ and ME-STARS beamforming coefficient $\left\{\mathbf{\Theta}_{s}\right\}$, we define $\mathbf{E}\left(\mathbf{q}_n\right)=\mathbf{g}_j\left(\mathbf{q}_{n}\right)\mathbf{g}_{\text{S,in}}\left(\mathbf{q}_{n}\right)^{H} \in\mathbb{C}^{L_{\text{S},k}\times L_{\text{B,S}}}$, where $\mathbf{E}\left(\mathbf{q}_n\right)\left[i,j\right]=e^{j\frac{2\pi}{\lambda}\left(\rho^i_{\text{S},k}\left(\mathbf{q}_{n}\right)-\rho^j_{\text{S,in}}\left(\mathbf{q}_{n}\right)\right)}$. In addition, we let $\mathbf{\omega}_{n,j}=\sqrt{\beta_n^s}e^{j\theta_{n}^s}\mathbf{1}_{L_{\text{S},k}}\boldsymbol{\Sigma}_j$ and $\mathbf{\nu}_k=\boldsymbol{\Sigma}_{\text{B,S}} \mathbf{A}\mathbf{w}_k$ to further simplify the calculation. Thus, ${\left|\left(\mathbf{u}_{s}^H\mathbf{Z}_j\left(\mathbf{Q}\right)\right)\mathbf{w}_k\right|^2}$ can be rewritten as follows:
	\begin{equation}
		\label{trans}
		{\left|\left(\mathbf{u}_{s}^H\mathbf{Z}_j\left(\mathbf{Q}\right)\right)\mathbf{w}_k\right|^2}=\left|  \mathbf{\omega}_{n,j} \mathbf{E}\left({\mathbf{q}_n}\right)\mathbf{\nu}_k+\alpha_{n,k}\right|^2,
	\end{equation}
	where $\alpha_{n,k}=\sum\limits_{\overline{n}\neq n}^N\mathbf{\omega}_{\overline{n},j}\mathbf{E}\left(\mathbf{q}_{\overline{n}}\right)\mathbf{\nu}_k$. Note that only  $\mathbf{E}\left(\mathbf{q}_n\right)$ is determined by the position of the $n$-th ME in \eqref{trans}. Accordingly, the achievable rate at user $j$ to decode user $k$'s signal can be represented as
	\begin{equation}
		R_{j \rightarrow k}=\log_2{\left(1+\frac{|  \mathbf{\omega}_{n,j} \mathbf{E}\left({\mathbf{q}_n}\right)\mathbf{\nu}_k+\alpha_{n,k}|^2}{\sum\limits_{\Omega_i>\Omega_k}|   \mathbf{\omega}_{n,j} \mathbf{E}\left({\mathbf{q}_n}\right)\mathbf{\nu}_i+\alpha_{n,i}|^2+\sigma^2}\right)}.
	\end{equation}
	Then, the MEs positions optimization problem can be formulated as
	\begin{subequations}\label{P8}
		\begin{equation}
			\mathop{\rm{max}}\limits_{ \left\{ \mathbf{Q},R_k^{\text{min}} \right\}}\quad \sum_{k=1}^K R_k^{\text{min}},
		\end{equation}
		\begin{equation}
			\label{P8_C1}{\rm s.t.}\;
			\eqref{P1_C2},\eqref{P1_C7},\eqref{P1_C8},\eqref{P15_C1}.
		\end{equation}
	\end{subequations}
	The problem \eqref{P8} is inherently non-convex w.r.t  $\mathbf{Q}$, which makes it hard to find a global optimal solution. Thus, we propose the GDA to find the maximum point in the solution space efficiently. Before detailing the GDA, we first tackle the constraints of the problem \eqref{P8} due to the limitation that GDA is not available for solving constrained problems. For the movement space constraint \eqref{P1_C7},  we introduce the Sigmoid function and the auxiliary variable set $\left\{ \mathbf{\overline{q}}_n,1\leq n \leq N\right\}$, such that:
	\begin{align}
		\mathbf{q}_n
		&=A\text{Sigmoid}\left(\mathbf{\overline{q}}_n\right) \notag\\
		&=A\left[\frac{1}{1+e^{-\overline{x}_n}},\frac{1}{1+e^{-\overline{y}_n}}\right].
	\end{align}
	The above equation enables the variable set $\left\{ \mathbf{\overline{q}}_n,1\leq n \leq N\right\}$ defined in the real space to be mapped to the confined real space $\mathcal{C}_{\text{ME}}$. Accordingly, by applying the relationship between the $\mathbf{q}_n$ and $\overline{\mathbf{q}}_n$, an equivalent form of the minimum distance constraint \eqref{P1_C8} can be derived as
	\begin{align}
		\label{P}
		d\left(\mathbf{\overline{q}}_n, \overline{\mathbf{q}}_{l}\right)=&\frac{ D_{\text{min}}}{A}-\left\|\text{Sigmoid} \left(\overline{\mathbf{q}}_n\right)-\text{Sigmoid}\left(\overline{\mathbf{q}}_{l}\right)\right\|_2 \leq 0, \notag\\
		&\forall n,l \in N,n\neq l.
	\end{align}
	Similarly, constraints \eqref{P1_C2} and \eqref{P15_C1} can be respectively rewritten as
	\begin{subequations}\label{P9}
		\begin{equation}
			\label{P9_C2}p\left (R_k^{\text{min}}\right )=\sum_{k=1}^K\sum_{\Omega_k \leq\Omega_j}R_k^{\text{min}}-R_{j\rightarrow k} \le 0 ,
		\end{equation}
		\begin{align}
			\label{P9_C3}
			r\left(\overline{\mathbf{q}}_n\right)=&\sum_{j =1}^{K}\sum_{\Omega_k<\Omega_i}\left|  \mathbf{\omega}_{n,j} \mathbf{E}\left(\overline{\mathbf{q}}_n\right)\mathbf{\nu}_i+\alpha_{n,i}\right|^2\notag\\&-\left|  \mathbf{\omega}_{n,j} \mathbf{E}\left(\mathbf{\overline{q}}_n\right)\mathbf{\nu}_k+\alpha_{n,k}\right|^2\le 0.
		\end{align}
	\end{subequations}
	
	By adding penalty terms $\kappa\text{max}\left \{0, d\left (\mathbf{\overline{q}}_n,\mathbf{\overline{q}}_l\right )\right \}$, $\kappa\text{max}\left \{0,p\left(R_k^{\text{min}}\right) \right \}$, and $\kappa\text{max}\left \{0, r\left(\overline{\mathbf{q}}_n\right)\right \}$ into the objective function, with $\kappa>0$ representing the penalty weight, a equivalent unconstrained form of problem \eqref{P8} can be derived as follows:
	\begin{align}
		\label{P10}
		&\mathop{\rm{max}}\limits_{ \left \{\mathbf{\overline{Q}},R_k^{\text{min}}\right \}}\quad f\left (\mathbf{\overline{Q}},R_k^{\text{min}}\right)=\sum_{k=1}^K R_k^{\text{min}}-\kappa\text{max}\left \{0, d\left (\mathbf{\overline{q}_n},\mathbf{\overline{q}_l}\right )\right \}\notag\\
		&
		-\kappa \text{max}\left \{0, r\left (\overline{\mathbf{q}}\right ) \right \}-\kappa \text{max}\left \{0, p\left (R_k^{\text{min}}\right ) \right \}.
	\end{align}
	Since the objective function in \eqref{P10} is not  differentiable given that $\text{max}\left\{ a,b\right\}$ function is a step function, the  log-sum-exp (LSE) is invoked as a smooth function to approximate the $\text{max}\left \{a,b\right \}$ function. Accordingly, problem \eqref{P10} can be rewritten as follows:
	\begin{align}
		\label{P11}
		&\max_{\left\{ \mathbf{\overline{Q}}, R_k^{\text{min}} \right\}} \quad f\left( \mathbf{\overline{Q}}, R_k^{\text{min}} \right) =\sum_{k=1}^K R_k^{\text{min}} - \kappa\left(  \zeta\ln\left(1+e^{\frac{d\left(\mathbf{\overline{q}}_n, \overline{\mathbf{q}}_{l}\right)}{\zeta}}\right) \right. \notag\\
		& \left.  + \zeta \ln\left(1+e^{\frac{r\left(\overline{\mathbf{Q}}\right)}{\zeta}}\right) + \zeta \ln\left(1+e^{\frac{p\left(R_k^{\text{min}}\right)}{\zeta}}\right) \right),
	\end{align}
	where $\zeta>0$ is the smoothing factor. 
	\begin{figure*}[ht]
		\setcounter{equation}{34}
		\begin{equation}
			\label{dao1}
			\frac{\partial f\left( \mathbf{Q},R_k^{\text{min}}\right)}{\partial \mathbf{q}_n}= 
			\kappa\left(\frac{e^{p\left(R_k^{\text{min}}\right) / \zeta}}{1+e^{p\left(R_k^{\text{min}}\right) / \zeta}}\sum_{k=1}^K\sum_{\Omega_k \leq\Omega_j}\frac{\partial R_{j \rightarrow k}}{\partial \mathbf{q}_n}-\sum_{l \neq n} \frac{e^{d\left(\mathbf{q}_n, \mathbf{q}_{l}\right) / \zeta}}{1+e^{d\left(\mathbf{q}_n, 
					\mathbf{q}_{l}\right) / \zeta}}\frac{\mathbf{q}_n-\mathbf{q}_l}{\sqrt{\left(\mathbf{q}_n-\mathbf{q}_l\right)^T\left(\mathbf{q}_n-\mathbf{q}_l\right)}}-\frac{e^{r\left(\mathbf{q}_n\right) / \zeta}}{1+e^{r\left(\mathbf{q}_n\right) / \zeta}}\frac{\partial r\left(\mathbf{q}_n\right)}{\partial \mathbf{q}_n}\right),
		\end{equation}
		\setcounter{equation}{36}
		\begin{equation}
			\label{X}
			\frac{\partial R_{j \rightarrow k}}{\partial \mathbf{q}_n}=\frac{\frac{\partial\upsilon_{j,k}\left(\mathbf{q}_n\right)}{\partial \mathbf{q}_n}\left(\sum\limits_{\Omega_i>\Omega_k}\upsilon_{j,i}\left(\mathbf{q}_n\right)+\sigma^2\right)-\upsilon_{j,k}\left(\mathbf{q}_n\right)\frac{\partial\left(\sum\limits_{\Omega_i>\Omega_k}\upsilon_{j,i}\left(\mathbf{q}_n\right)\right)}{\partial \mathbf{q}_n}}{\text{ln}2\left(1+ \frac{ \upsilon_{j,k}\left(\mathbf{q}_n\right)}{\sum\limits_{\Omega_i>\Omega_k}\upsilon_{j,i}\left(\mathbf{q}_n\right)+\sigma^2}\right)\left(\sum\limits_{\Omega_i>\Omega_k}\upsilon_{j,i}\left(\mathbf{q}_n\right)+\sigma^2\right)^2}.
		\end{equation}
		\hrulefill
	\end{figure*}
	Note that the accuracy of approximation in \eqref{P11} is inversely dependent on $\zeta$, with higher precision achieved at lower $\zeta$. Now, the GDA can be adopted to effectively solve the unconstrained differential problem \eqref{P11}. 
	
	Afterwards, $\mathbf{\overline{Q}}^{\left(i\right)}$ and $R_k^{\text{min}^{\left(i\right)}}$ are updated by moving
	along the gradient direction in the $i$-th iteration, which can be written as
	\setcounter{equation}{31}
	\begin{subequations}
		\label{D}
		\begin{align}
			&\label{D1}R_k^{\text{min}^{\left(i+1\right)}}=R_k^{\text{min}^{\left(i\right)}}+\gamma^{\left(i\right)}\frac{\partial f\left( \mathbf{\overline{Q}}^{\left(i\right)},R_k^{\text{min}^{\left(i\right)}}\right)}{\partial R_k^{\text{min}^{\left(i\right)}}},
			\\
			&\label{D2}\begin{aligned}[t]
				\mathbf{\overline{Q}}^{(i+1)}&=\mathbf{\overline{Q}}^{(i)}+\gamma^{(i)} \frac{\partial f\left( \mathbf{\overline{Q}}^{\left(i\right)},R_k^{\text{min}^{\left(i\right)}}\right)}{\partial \mathbf{\overline{Q}}^{\left(i\right)}},
			\end{aligned}
		\end{align}
	\end{subequations}
	where $\gamma^{(i)}$ is the step size in the $i$-th iteration. Specifically, $\partial f\left( \mathbf{\overline{Q}},R_k^{\text{min}}\right)/\partial R_k^{\text{min}}$ is given by
	\begin{equation}
		\label{rkmin}
		\begin{aligned}
			\frac{\partial f\left( \mathbf{\overline{Q}},R_k^{\text{min}}\right)}{\partial R_k^{\text{min}}}=1-\kappa \frac{e^{p\left(R_k^{\text{min}}\right) / \zeta}}{1+e^{p\left(R_k^{\text{min}}\right) / \zeta}}.
		\end{aligned}
	\end{equation}
	Furthermore, based on the chain rule, the gradient of $\mathbf{\overline{Q}}$ is given by
	\begin{align}
		\label{dao2}
		\frac{\partial f\left( \mathbf{\overline{Q}},R_k^{\text{min}}\right)}{\partial \mathbf{\overline{Q}}}=\frac{\partial f\left( \mathbf{Q},R_k^{\text{min}}\right)}{\partial \mathbf{Q}} \circ A\frac{e^{-\mathbf{\overline{Q}}}}{\left(1+e^{\mathbf{-\overline{Q}}}\right)^2}\in \mathbb{C}^{2\times N},
	\end{align}
	where $\circ$ is the Hadamard multiplication. Note that the calculation of  $\partial f\left( \mathbf{Q},R_k^{\text{min}}\right)/\partial \mathbf{Q}$ is performed by computing the gradient w.r.t. $\mathbf{q}_n$, which constitutes the $n$-th column of the partial deviation matrix $\partial f\left( \mathbf{Q},R_k^{\text{min}}\right)/\partial \mathbf{Q}$. Specifically, $\partial f\left( \mathbf{Q},R_k^{\text{min}}\right)/\partial \mathbf{q}_n$ can be calculated as in \eqref{dao1}, shown at the top of this page, where $d\left(\mathbf{q}_n,\mathbf{q}_l\right)=D_{\text{min}}-\left\|\mathbf{q}_n-\mathbf{q}_l \right\|_2$ and  $\partial r\left(\mathbf{q}_n\right)/\partial \mathbf{q}_n$ is given by
	\setcounter{equation}{35}
	\begin{equation}
		\label{rdao}
		\frac{\partial r\left(\mathbf{q}_n\right)}{\partial \mathbf{q}_n}=\sum_{j =1}^{K}\sum_{\Omega_k<\Omega_i}\frac{\partial\upsilon_{j,i}\left(\mathbf{q}_n\right)}{\partial \mathbf{q}_n}-\frac{\partial\upsilon_{j,k}\left(\mathbf{q}_n\right)}{\partial \mathbf{q}_n},
	\end{equation}
	where $\upsilon_{j,k}\left(\mathbf{q}_n\right)=\left|  \mathbf{\omega}_{n,j} \mathbf{E}\left({\mathbf{q}_n}\right)\mathbf{\nu}_k+\alpha_{n,k}\right|^2$. Similarly, $\partial R_{j \rightarrow k}/\partial \mathbf{q}_n$ is given in \eqref{X}, shown at the top of this page, for which the detailed derivation is given in Appendix A. 
	\begin{algorithm}[t]
		\caption{GDA-based  algorithm for solving problem \eqref{P8}}
		\label{alg:pro3}
		\begin{algorithmic}[1]
			\STATE Initialize the MEs positions matrix $\mathbf{Q}$, the penalty factor $\kappa$, the smoothing parameter $\zeta$ and the initial step size $\gamma$;
			\STATE Set $\overline{\mathbf{Q}}^{(0)}=\mathbf{Q}$;
			\REPEAT
			\STATE Set the iteration index $i = 0$ for inner loop;
			\REPEAT
			\STATE Calculate $\partial f\left( \mathbf{\overline{Q}}^{\left(i\right)},R_k^{\text{min}^{\left(i\right)}}\right)/\partial R_k^{\text{min}^{\left(i\right)}}$ in \eqref{rkmin};
			\STATE Calculate $\partial r\left(\mathbf{q}_n^{(i)}\right)/\partial \mathbf{q}_n^{(i)}$ and $\partial R_{j \rightarrow k}/\partial \mathbf{q}_n^{(i)}$ in \eqref{rdao} and \eqref{X}, respectively;
			\STATE Calculate $\partial f\left( \mathbf{\overline{Q}}^{\left(i\right)},R_k^{\text{min}^{\left(i\right)}}\right)/\partial \mathbf{q}_n^{\left(i\right)}$ in \eqref{dao1};
			\STATE Calculate $\partial f\left( \mathbf{\overline{Q}}^{\left(i\right)},R_k^{\text{min}^{\left(i\right)}}\right)/\partial \mathbf{\overline{Q}}^{\left(i\right)}$ in \eqref{dao2};
			\STATE Update $\gamma^{(i)}=\iota_{\gamma}\gamma^{(i)}$ until the Armijo condition \eqref{armijo} is satisfied;
			\STATE Update $\mathbf{\overline{Q}}^{(i+1)}$ and $R_k^{\text{min}^{(i)}}$ in \eqref{D}
			;    \STATE $i \gets i + 1$
			\UNTIL{the increment of the objective value is less than the predefined threshold $\varepsilon_1$ or $i \geq i_{\text{max}}$; }
			\STATE Update $\zeta=l_{\zeta}\zeta$ and $\kappa=l_{\kappa}\kappa$;
			\UNTIL{the minimum distance constraint \eqref{P1_C8}, the resource allocation constraint \eqref{P1_C2}, and the minimum rate constraint \eqref{P15_C1} are satisfied;}
			\STATE \textbf{Output:} $ \mathbf{Q}=A\text{Sigmoid}\left(\overline{\mathbf{Q}}^{(i)}\right)$.
		\end{algorithmic}
	\end{algorithm}

	\vspace{-5pt}
	As for the determination of $\gamma$, we utilize the backtracking line search to ensure accuracy while maintaining low complexity. Specifically, $\gamma$ is set to be a relatively large value and is updated as $\gamma^{(i+1)}=\iota_{\gamma}\gamma^{(i)}$ with $0<l_{\gamma}<1$ until the following Armijo condition is satisfied:
	\setcounter{equation}{37}
	\begin{align}
		\label{armijo}
		f\left (\mathbf{\overline{Q}}^{(i+1)},R_k^{\text{min}^{\left( i+1\right)}}\right)&\geq f\left(\mathbf{\overline{Q}}^{\left(i\right)},R_k^{\text{min}^{\left( i\right)}}\right)\notag\\
		&+\mu\gamma^{(i)}\left\|  \nabla f\left(\mathbf{\overline{Q}}^{\left(i\right)},R_k^{\text{min}^{\left( i\right)}}\right)\right\|_F^2, 
	\end{align}
	where $0<\mu<1$ masters the increment range of the objective function. By imposing a lower bound on the acceptable increment in the objective function value, the Armijo condition dynamically balances the need for substantial progress per iteration against the risk of divergence, thus enhancing the robustness of the convergence process.

	It is noted that problem \eqref{P11} is equivalent to problem \eqref{P8} when $\zeta$ is small enough and $\kappa$ is sufficiently large~\cite{penalty}. Nevertheless, to maintain the balance between the primary objective and the penalty term, it is critical to avoid initializing $\kappa$ with an excessively high value, which could allow the penalty to dominate the optimization process. Therefore, we propose a nested loop iterative GDA framework, where $\partial f\left( \mathbf{\overline{Q}}^{(i)},R_k^{\text{min}^{(i)}}\right)/\partial R_k^{\text{min}^{(i)}}$ and $\partial f\left( \mathbf{\overline{Q}}^{(i)},R_k^{\text{min}^{(i)}}\right)/\partial \mathbf{\overline{Q}}^{(i)}$ in the $i$-th iteration are calculated in the inner loop. $\kappa$ and $\zeta$ are iteratively updated as $\kappa=l_{\kappa}\kappa,\quad l_{\kappa}>1$ and $\zeta=l_{\zeta}\zeta,\quad 0<l_{\zeta}<1$ in the outer loop until the minimum distance constraint \eqref{P1_C8}, the resource allocation constraint \eqref{P1_C2}, and the minimum rate constraint \eqref{P15_C1} are satisfied. The detailed algorithm is shown in \textbf{Algorithm~\ref{alg:pro3}}.
	
	\subsection{Convergence and Complexity Analysis of the Proposed AO Algorithm}
	Following the above three subsections, the details of the proposed AO algorithm to tackle the original problem \eqref{P1} are provided in \textbf{Algorithm~\ref{alg:pro4}}. Specifically, the iterative algorithm optimizes the BS beamforming vectors $\{ \mathbf{w}_k\}$, ME-STARS beamforming coefficients $\left\{ \mathbf{\Theta}_s\right\}$ and the positions of MEs $\left\{ \mathbf{Q}\right\}$ by solving the problem \eqref{P4}, \eqref{P7}, and \eqref{P9} respectively in each iteration. Each iteration  uses the solutions obtained from the previous iteration as the input local points except the first iteration. The iterative optimization process enables sequential refinement of the coupled variables, which enhances the overall system performance through cumulative improvements.
	\subsubsection{Convergence Analysis}
	In this section, we demonstrate the convergence of the proposed algorithm. With the initialized decoding order $\Omega$, the BS beamforming vectors $\left\{ \mathbf{w}_k  \right\}$, the ME-STARS beamforming coefficients $\left\{ \mathbf{\Theta}_r, \mathbf{\Theta}_t\right\}$, and the positions of MEs $\left\{\mathbf{Q}\right\}$, the following inequality can be obtained
	\setcounter{equation}{38}
	\begin{align}
		& R_{\text{sum}}\left(\mathbf{w}_k^{(\tau)},\mathbf{\Theta}_r^{(\tau)},\mathbf{\Theta}_t^{(\tau)},\mathbf{Q}^{(\tau)}\right)\notag\\
		&\stackrel{(a)}{\leqslant} R_{\text{sum}}\left(\mathbf{w}_k^{(\tau+1)},\mathbf{\Theta}_r^{(\tau)},\mathbf{\Theta}_t^{(\tau)},\mathbf{Q}^{(\tau)}\right)\notag\\
		&\stackrel{(b)}{\leqslant} R_{\text{sum}}\left(\mathbf{w}_k^{(\tau+1)},\mathbf{\Theta}_r^{(\tau+1)},\mathbf{\Theta}_t^{(\tau+1)},\mathbf{Q}^{(\tau)}\right)\notag\\
		&\stackrel{(c)}{\leqslant} R_{\text{sum}}\left(\mathbf{w}_k^{(\tau+1)},\mathbf{\Theta}_r^{(\tau+1)},\mathbf{\Theta}_t^{(\tau+1)},\mathbf{Q}^{(\tau+1)}\right),
	\end{align}
	where inequality $(a)$ holds for the property that the first-order Taylor expansions provide locally tight approximations with the fixed $\left\{ \mathbf{\Theta}_r^{(\tau)},\mathbf{\Theta}_t^{(\tau)} \right \}$ and $\left \{ \mathbf{Q}^{(\tau)} \right\}$; $(b)$ is valid since the solution to the reformulated problem \eqref{P7} consistently yields a lower bound on the objective value of the original problem \eqref{P5}; $(c)$ follows the fact that GDA-based algorithm can always optimize variables in the direction of increasing function values. Based on above analysis, it is obvious that the objective function value of \eqref{P1} is non-decreasing by alternatively solving problems \eqref{P3}, \eqref{P5}, and \eqref{P8} after each iteration. Therefore, since the objective function is upper bounded, this monotonicity property ensures convergence of 
	\textbf{Algorithm~\ref{alg:pro4}} to a stationary point of \eqref{P1}.
	
	\subsubsection{Complexity Analysis}
	It is noted that the complexity of \textbf{Algorithm~\ref{alg:pro4}} is mainly determined by that of \textbf{Algorithm~\ref{alg:pro1}}, \textbf{Algorithm~\ref{alg:pro2}} and \textbf{Algorithm~\ref{alg:pro3}}. If the interior point method is adopted to solve the standard convex problem, the computational complexities of the \textbf{Algorithm~\ref{alg:pro1}} and   \textbf{Algorithm~\ref{alg:pro2}} are $\mathcal{O}\left(I_1M^{3.5}\right)$ and $\mathcal{O}\left(I_2^{\text{in}}I_2^{\text{out}}N^{3.5}\right)$ respectively~\cite{inbook}, where $I_1$ is the number of iterations needed for the convergence for solving problem \eqref{P4} and $I_2^{\text{in}}$, $I_2^{\text{out}}$ represent the number of inner and outer iterations needed to reach the convergence, respectively. As for \textbf{Algorithm~\ref{alg:pro3}}, the computational complexity mainly depends on the calculation of the gradient, which is $\mathcal{O}\left(\left(N+K\right)^2L_{\text{B,S}}L_{\text{S},k}\right)$. The computational complexity of the backtracking line search is $\mathcal{O}\left(I_{\text{ls}}MN\right)$, where $I_{\text{ls}}$ is the number of iterations required to determine the accurate step size. Thus, \textbf{Algorithm~\ref{alg:pro3}} exhibits a complexity of $\mathcal{O}\left(I_3^{\text{out}}I_3^{\text{in}}\left(\left(N+K\right)^2L_{\text{B,S}}L_{\text{S},k}+I_{\text{ls}}MN\right)\right)$, where $I_3^{\text{out}}$ and $I_3^{\text{in}}$ represent the number of outer and inner iterations. Therefore, since the computational complexity of \textbf{Algorithm 1, 2, 3} are polynomial in $M$, $N$ and $K$, the practical implementation of \textbf{Algorithm 4} is feasible.
	\begin{algorithm}
		\caption{AO algorithm for solving problem \eqref{P1}}
		\label{alg:pro4}
		\begin{algorithmic}[1]
			\STATE Initialize feasible points $ \left\{\mathbf{\Theta}_{r}^{(0)},\mathbf{\Theta}_{t}^{(0)}\right\}$, $\left\{ \mathbf{w}_k^{(0)} \right\}$ and $\left\{  \mathbf{Q}^{(0)} \right\}$;
			\STATE Initialize the decoding order $\Omega$;
			\STATE Set the iteration index $\tau = 0$;
			\REPEAT
			\STATE Update the BS beamforming vectors $\left\{ \mathbf{w}_k^{(\tau+1)}\right\}$ for given $ \left\{\mathbf{\Theta}_{r}^{(\tau)},\mathbf{\Theta}_{t}^{(\tau)}\right\}$ and $\left\{  \mathbf{Q}^{(\tau)}\right\}$by invoking  \textbf{Algorithm~\ref{alg:pro1}};
			\STATE Update the ME-STARS beamforming coefficients $ \left\{\mathbf{\Theta}_{r}^{(\tau+1)},\mathbf{\Theta}_{t}^{(\tau+1)}\right\}$ for given $\left\{ \mathbf{w}_k^{(\tau)}\right\}$ and $\left\{  \mathbf{Q}^{(\tau)}\right\}$by invoking  \textbf{Algorithm~\ref{alg:pro2}};
			\STATE Update the positions of MEs $\left\{  \mathbf{Q}^{(\tau+1)}\right\}$ for given $ \left\{\mathbf{\Theta}_{r}^{(\tau)},\mathbf{\Theta}_{t}^{(\tau)}\right\}$ and $\left\{ \mathbf{w}_k^{(\tau)}\right\}$ by invoking  \textbf{Algorithm~\ref{alg:pro3}};
			\STATE $\tau$=$\tau$+1;
			\UNTIL{the fractional increase of the objective value is less than $\varepsilon_1$ or $\tau\geq\tau^{\text{max}}$;}
			\STATE Output: $\left\{\mathbf{w}_k^*\right\}$, $ \left\{\mathbf{\Theta}_{r}^*,\mathbf{\Theta}_{t}^*\right\}$ and $\left\{  \mathbf{Q}^*\right\}$.
		\end{algorithmic}
	\end{algorithm}
	\section{Numerical Results}
    This section presents numerical results to evaluate the performance of the proposed algorithms. Under considered 3D coordinate system, the location of the BS is set at $\left(20,0,10\right)$ meters, while the users in the transmission and reflection region are randomly distributed within the circle areas centered at $\left(20,15,0\right)$ and $\left(20,5,0\right)$ with the radius of 2 meters, respectively. The number of paths in the FRV channel model is assumed to be the same\footnote{This assumption is to simplify the simulation setup, while the proposed algorithms are applicable to various values of $L_{\text{B,S}}$ and $L_{\text{S},k}, \forall k \in \mathcal{K}$.}, i.e., $L_{\text{B,S}}=L_{\text{S},k}=L,\forall k \in \mathcal{K}$. The path response matrix $\boldsymbol{\Sigma}_{\text{B,S}}$ is a diagonal matrix with circularly symmetric complex Gaussian (CSCG)-distributed entries $\mathcal{CN}\left(0,\beta_{\text{B,S}}/L\right)$, where $\beta_{\text{B,S}}=\beta_0d_{\text{B,S}}^{-\alpha_0}$  denotes the expected BS-STARS channel power gain, with $\beta_0$,  $d_{\text{B,S}}$, and $\alpha_0$ representing the channel power gain at 1 m reference distance, the BS-STARS distance, and the path-loss exponent, respectively. The diagonal entries of $\boldsymbol{\Sigma}_{k}$ follow the CSCG distribution $\mathcal{CN}(0,\beta_k/L)$, where $\beta_{\text{S},k}=\beta_0d_{\text{S},k}^{-\alpha_0}$ is the expected STARS-user channel power gain with $d_{\text{S},k}$ representing the distance between the ME-STARS and the user $k$. The AODs and AOAs involved in the FRV model $\theta_{\text{B}}^e$, $\varphi_{\text{B}}^e$, $\theta_{\text{S,in}}^f$, $\varphi_{\text{S,in}}^f$, $\theta_{\text{S},k}^g$, $\varphi_{\text{S},k}^g$, $\forall e,f,g$ are assumed to be independently obtained by randomly generating within the range $\left[-\frac{\pi}{2},\frac{\pi}{2}\right]$. Unless otherwise stated, all parameter values are given in Table \ref{tab:parameters}. Considering that the proposed AO-based algorithm may converge to different local solutions, we perform multiple initializations of the optimization variables and select the one yielding the best performance. It is noted that all the points in Fig. \ref{fig2} - Fig. \ref{fig5} are obtained through $10^3$ user distributions and channel realizations.
	\begin{table}[htbp]
		\centering
		\caption{Simulation Parameters.}
		\label{tab:parameters}
		\begin{tabular}{cccc}
			\toprule
			Parameter Description & Value  \\
			\midrule
			Carrier length $\lambda$ &0.1m      \\
			Number of antennas at the BS $M$& 2       \\
			Number of MEs $N$ & 8      \\
			Number of users $K$ & 4      \\
			Path-loss exponent $\alpha_0$ & 2.2\\
			Channel power gain at 1 meter $\beta_0$    & -30 dB\\
			BS maximum transmit power $P_{\text{B}}$   & 30 dBm \\
			Noise power $\sigma^2$   &-90 dBm\\
			Number of paths $L$ & 2      \\
			Length of the sides of moving region $A$ &4 $\lambda$ \\
			
			Initial gradient-ascent step size  $\gamma^{(0)}$ & 10\\
			Step size shrinking parameter  $l_{\gamma}$ & 0.9 \\
			Initial penalty factor $\kappa$ in Algorithm 2 $\&$ 3 & $10^{-4}$  \\
			Initial smoothing  factor $\zeta$ in Algorithm 3 & $10^{-4}$  \\
			Penalty scaling factor $l_{\kappa}$  in Algorithm 2 $\&$ 3    & 10 \\
			Smoothing scaling factor $l_{\zeta}$ in Algorithm  3     & 0.1 \\
			
			\bottomrule
		\end{tabular}    
	\end{table}
	
	To assess the performance advantages of the ME-STARS-assisted NOMA system and validate the proposed algorithm, we consider the following benchmark schemes:
	\begin{itemize}
		\item \textbf{FPE-STARS}: In this case, the $N$ FPEs of the STARS are assumed to be distributed uniformly at intervals of $\frac{\lambda}{2}$.
		The sum rate maximization problem can be resolved by iteratively solving the BS beamforming and ME-STARS beamforming subproblems with the algorithms proposed in Section III. 
		\item \textbf{RPE-STARS (STARS with random-position elements)}: In this case, the $N$ elements of the STARS are assumed to be distributed randomly within the given moving region $\mathcal{C}_{\text{ME}}$. Specifically, to guarantee the minimum distance condition for fair comparison, we fill the moving region $\mathcal{C}_{\rm ME}$ with circles of diameter $\frac{\lambda}{2}$, and then randomly select the centers of $N$ circles as the element positions. The sum rate maximization problem can be addressed via the same algorithm as that for the FPE-STARS.
		\item \textbf{ME-RIS}: This benchmark employs a dual-RIS architecture to achieve full-space coverage, comprising one RIS only for reflection and one RIS only for transmission deployed at the location of the ME-STARS. To ensure fairness, each RIS is set to possess $N/2$ elements.
		\item \textbf{FPE-RIS}: This case adopts the same setup as the ME-RIS, except that elements positions are fixed with the spacing of $\frac{\lambda}{2}$.
	\end{itemize}
	\begin{figure}[t]
		\centering
		\includegraphics[scale=0.32]{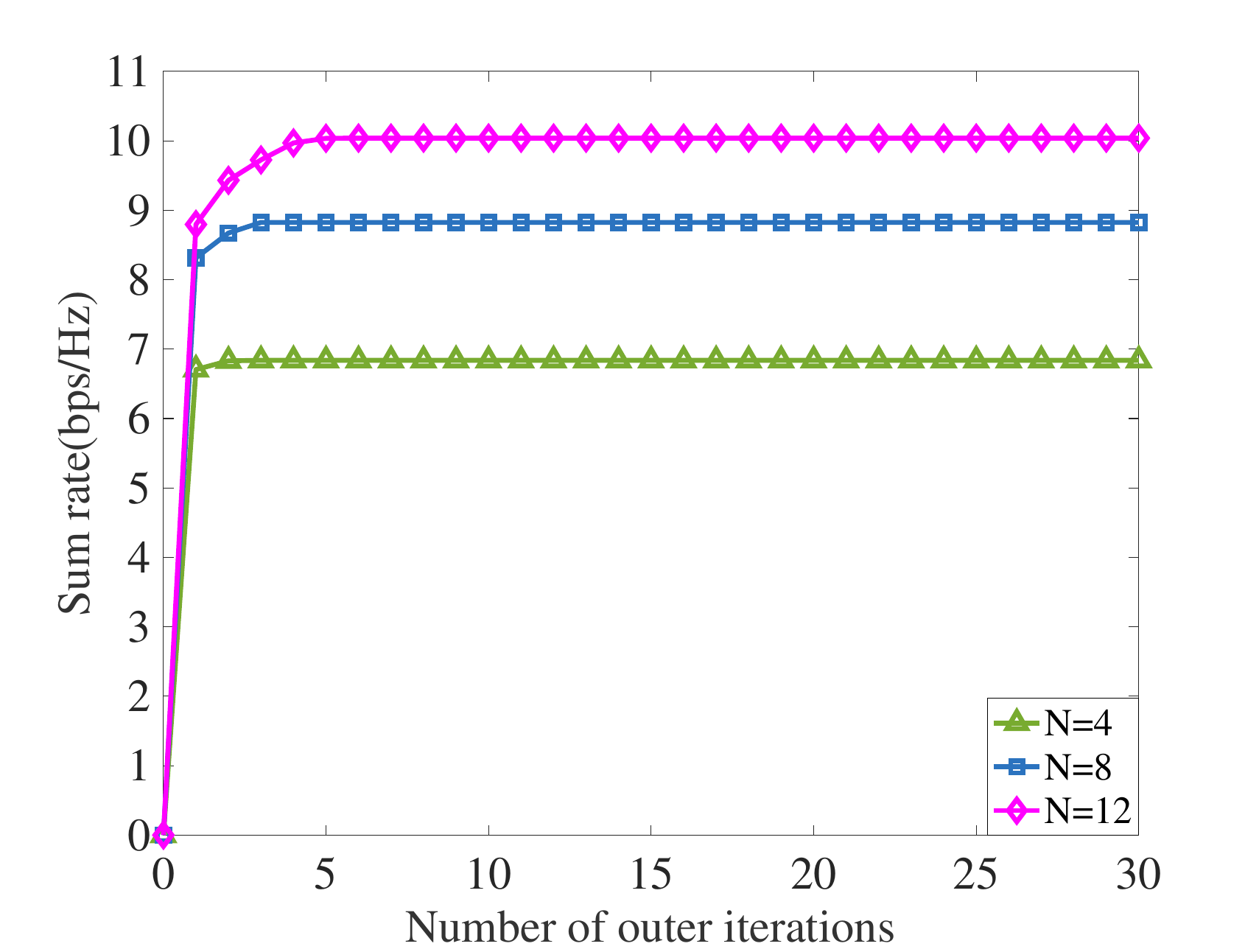}
		\caption{Coverage performance versus the number of ME-STARS elements. }
		\label{fig1}
	\end{figure}
	
	Fig. $\ref{fig1}$ reveals the convergence behavior of the  overall AO algorithm for different number of ME-STARS elements $N$. The curves are derived from a single stochastic channel realization.  It can be observed that the sum rate increases rapidly at the beginning of the iterations and becomes stable within 5 outer iterations under the considered simulation settings. Also, the number of iterations necessary to achieve convergence increases with the increment of the number of MEs. This is expected as larger number of MEs leads to more complicated joint beamforming and MEs positions design, thus resulting in increased computation burden and lower convergence rate.
	
	\begin{figure}[t]
		\centering
		\includegraphics[scale=0.32]{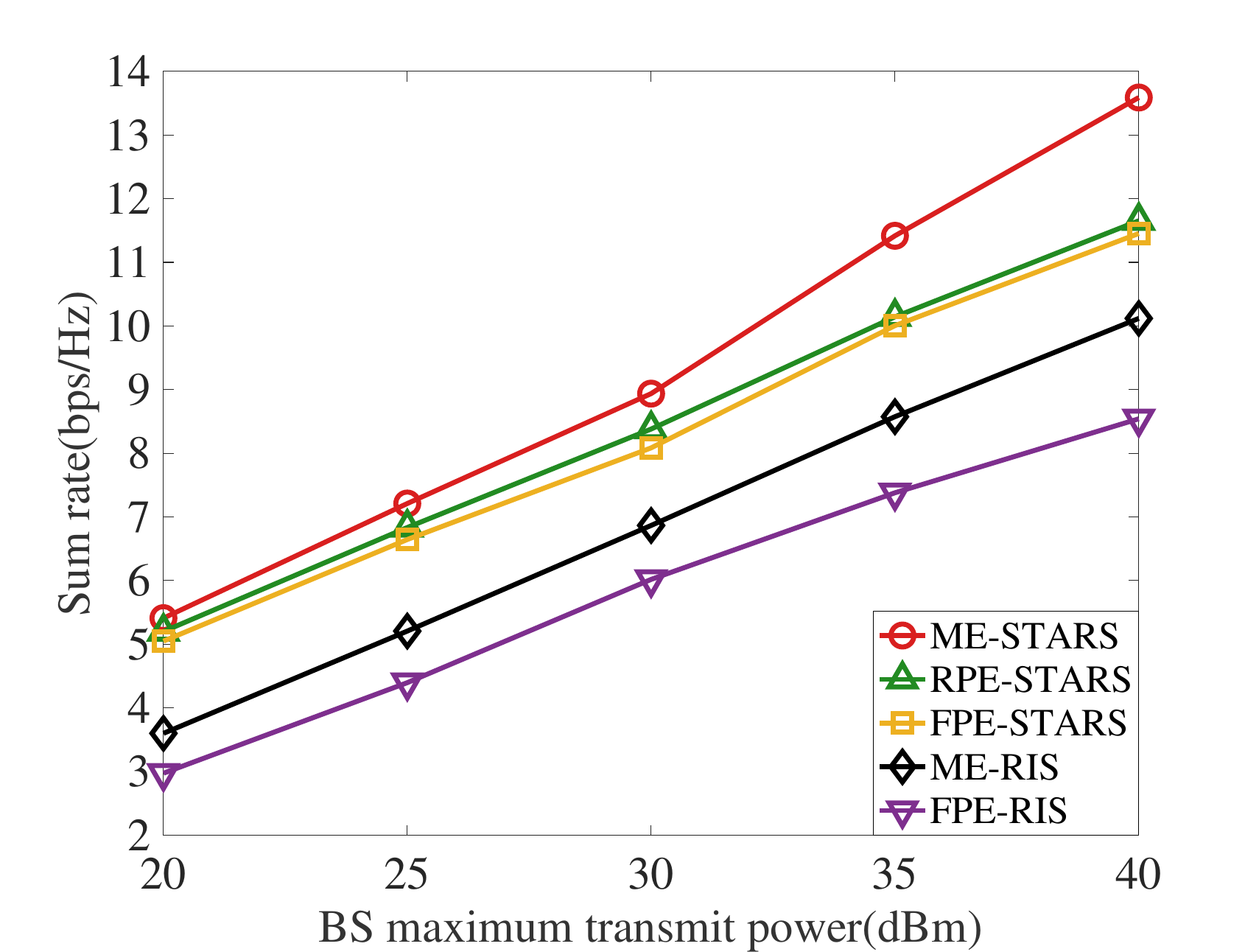}
		\caption{Sum rate versus the BS maximum transmit power.}
		\label{fig2}
	\end{figure}
	
	In Fig. \ref{fig2}, we investigate the sum achievable rate of different schemes versus the BS maximum transmit power $P_{\text{B}}$. It is shown that all baseline schemes increase as $P_{\text{B}}$ increases. It is also observed that ME-STARS outperforms the RPE-STARS and FPE-STARS by around $8.81\%$ and $11.25\%$ on average, respectively, which provides support for the underlying hypothesis that the performance gain of ME-STARS stems from its ability of utilizing spatial-domain diversity. Moreover, the performance improvement of ME-STARS compared to FPE-STARS and RPE-STARS exhibits a higher growth rate with increased $P_{\text{B}}$. This can be explained as follows. In the interference-limited regime, the ME-STARS possesses higher spatial DoFs for moving MEs to positions with reduced channel correlations among users, thereby achieving higher interference mitigation gain.
	
	
	\begin{figure}[t]
		\centering
		\includegraphics[scale=0.32]{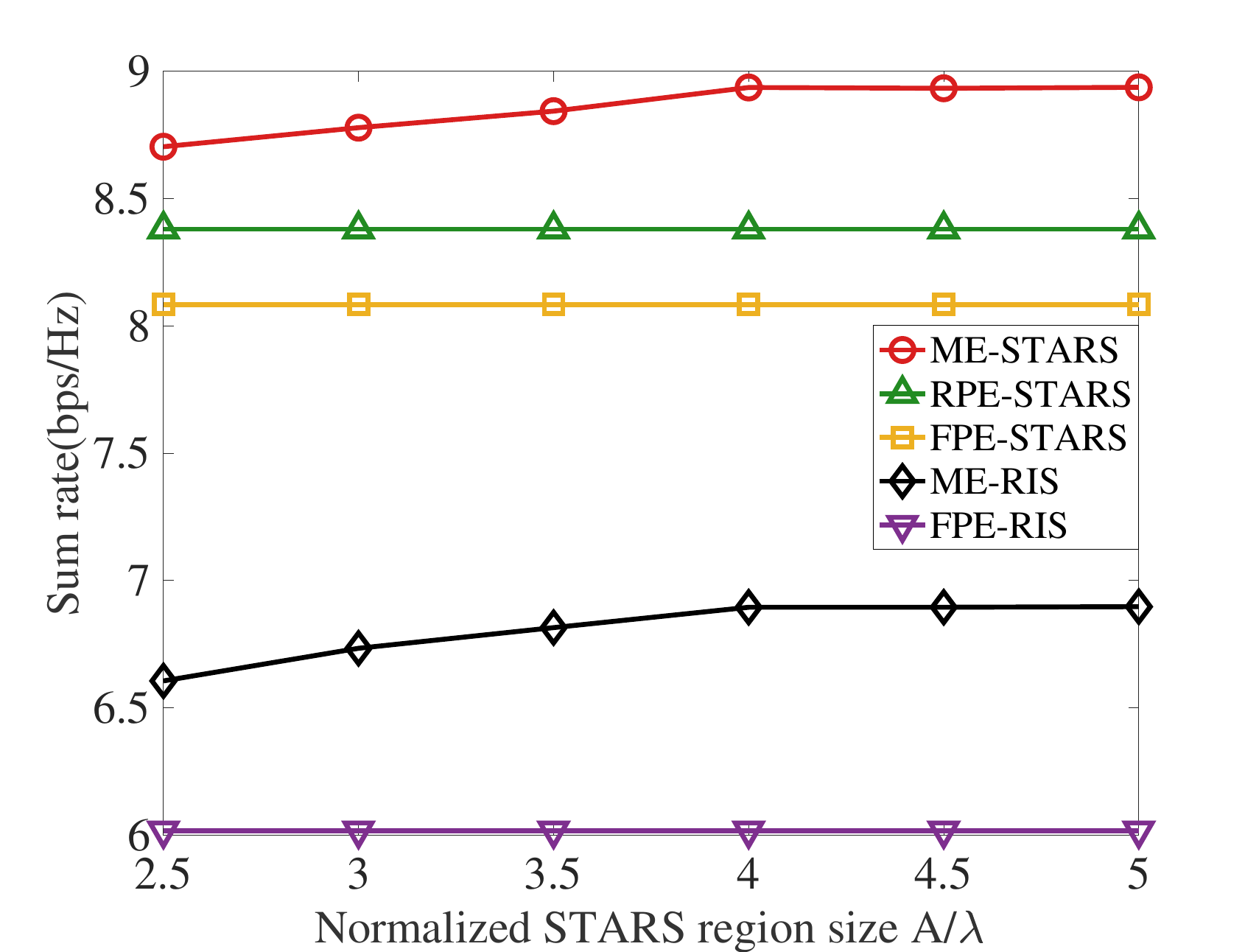}
		\caption{Sum rate versus the normalized STARS region size.}
		\label{fig3}
	\end{figure}
	
	In Fig. \ref{fig3}, we study the sum achievable rate versus the normalized moving region size $A/\lambda$. Note that the sum achievable rate across all ME schemes exhibits enhancement as the ME region size is expanded. It can be observed that the performance of ME-STARS is $8.65\%$ and $10.60\%$ higher than that of FPE-STARS when normalized region size $A/\lambda$ is 3 and 4 respectively. The reason behind this is that, with the increment of the moving region size, higher spatial-domain diversity gain brought by extra DoFs can be explored, and MEs can achieve more flexible position adjustment for constructing preferable channel conditions. It is noted that the sum rate improvement gradually converges with the increment of $A/\lambda$. This can be attributed to the fact that, beyond a certain threshold, the number of optimal antenna positions exceeds that of deployable antennas, making additional spatial expansion redundant in terms of performance gain. In other words, it implies that maximal sum rate can be obtained within a finite moving region size. 
	
	
	\begin{figure}[t]
		\centering
		\includegraphics[scale=0.32]{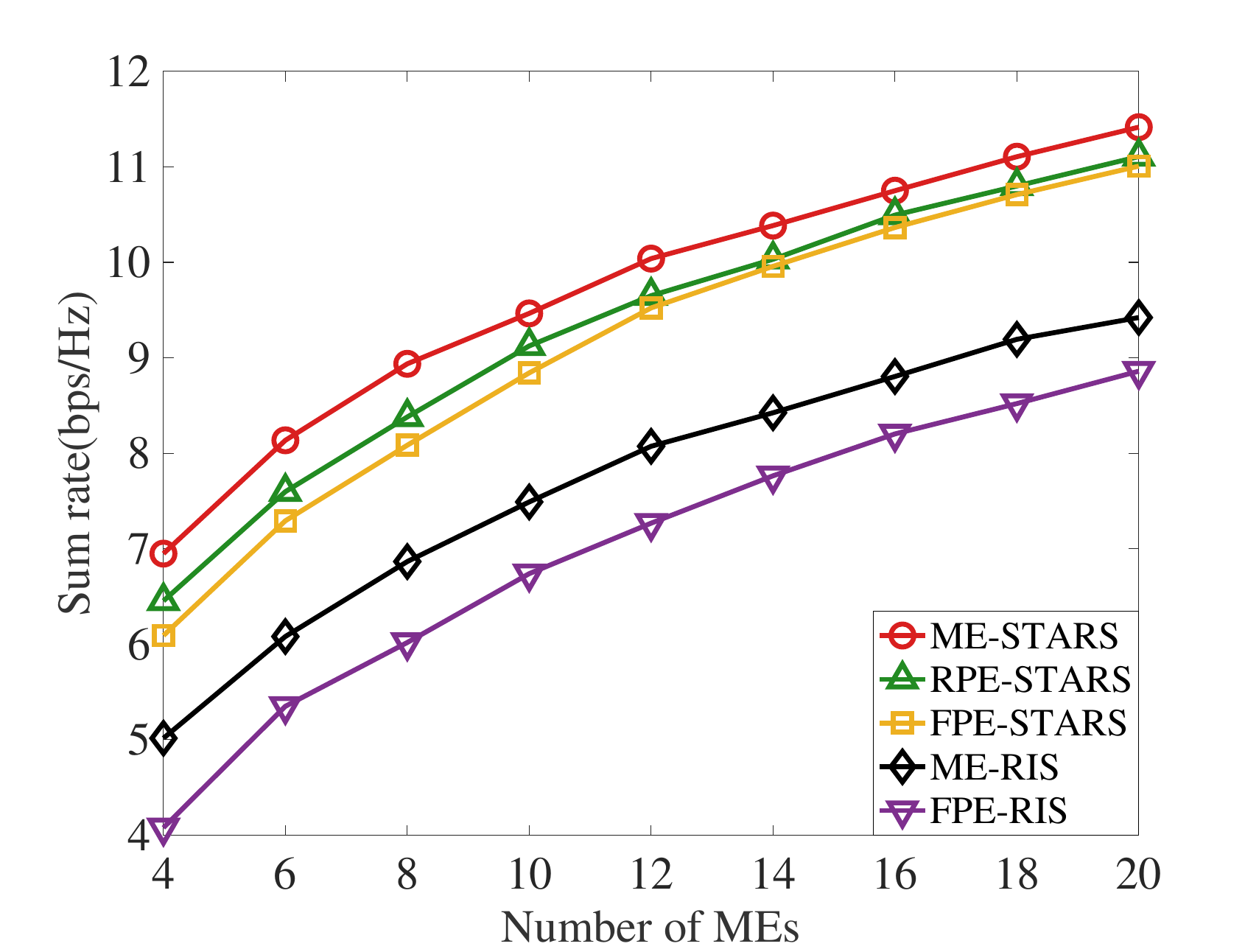}
		\caption{Sum rate versus the number of elements.}
		\label{fig4}
	\end{figure}
	\begin{figure}[t]
		\centering
		\includegraphics[scale=0.32]{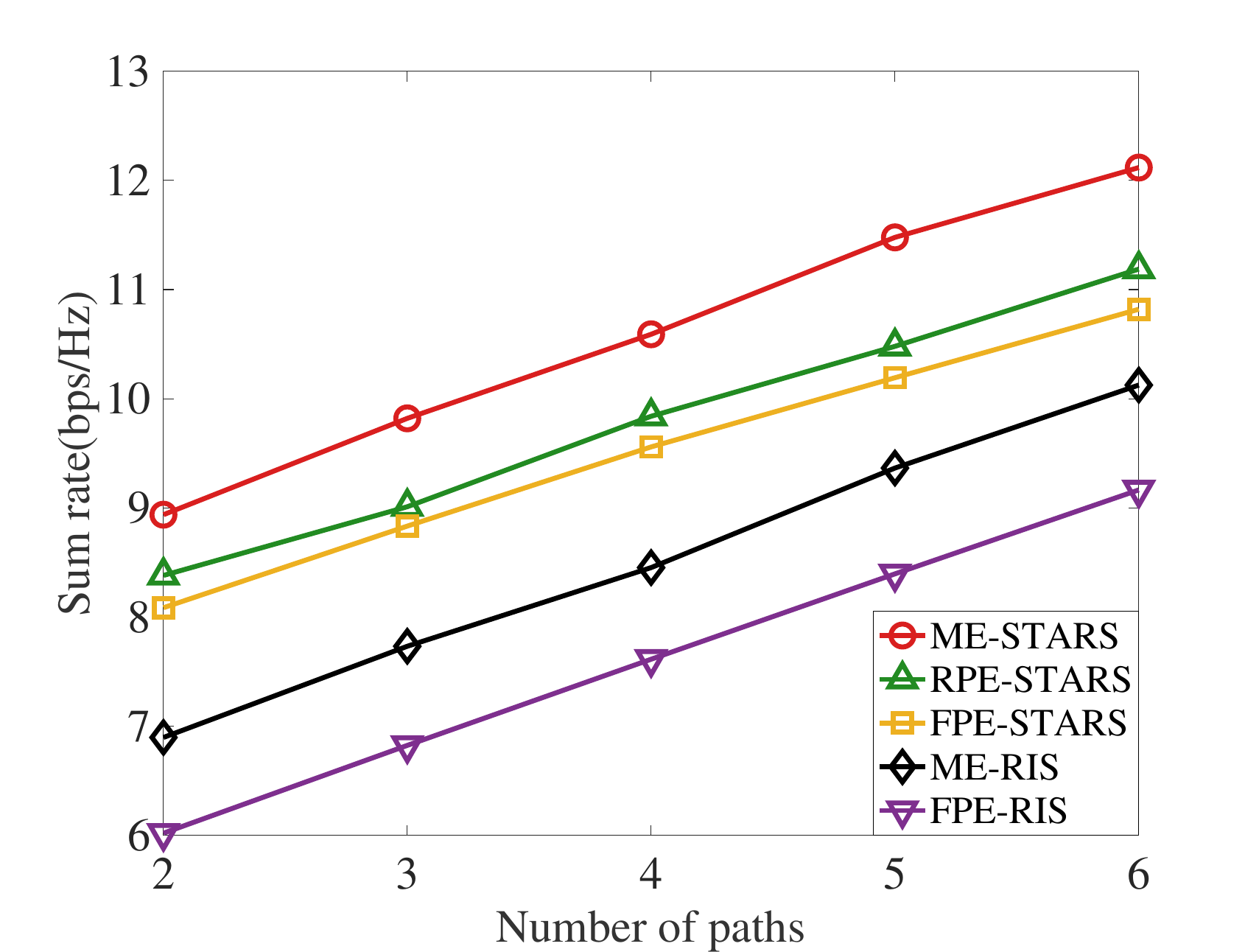}
		\caption{Sum rate versus the number of paths of each channel.}
		\label{fig5}
	\end{figure}
	In Fig. \ref{fig4}, we investigate the sum achievable rate versus the number of ME elements $N$. All schemes exhibit a monotonically increasing in terms of sum rates with growing $N$, while ME-STARS maintains superior performance. This expected behavior stems from the enhanced passive beamforming gain and improved channel conditions enabled by larger $N$ values. It is also noted that the ME-STARS performance advantage rate 
	over FPE-STARS gets smaller with the increment of $N$. This is because, the number of locally optimal positions that maximize the sum rate is inherently limited due to the constrained MEs moving region size. Thus, as the number of MEs $N$ increases beyond the number of local maxima, it becomes impossible to allocate all MEs to their preferable positions. Consequently, this limitation results in a saturation of the achievable sum rate.
	
	Fig. \ref{fig5} illustrates the sum achievable rate performance versus the number of paths $L$. Note that the sum rate of all schemes increases with the increment of $L$. It is also shown that ME-STARS achieves the highest performance under all the number of $L$. The underlying rationale for this phenomenon can be analyzed from two perspectives. First, as the number of channel paths $L$ increases, the multiuser channel matrix exhibits a higher rank, indicating reduced correlation among user channel vectors. This characteristic facilitates more effective multiuser interference suppression in the system. Second, the presence of additional multipath components leads to more pronounced small-scale fading effects, which in turn generates a larger number of local maxima within the defined moving region. Consequently, the MEs-based schemes achieve improved performance through access to a wider set of favorable local optimal positions.
	
	\section{Conclusion}
	This paper investigated a novel ME-STARS-assisted NOMA communications, formulating a sum achievable rate maximization problem through joint optimization of BS beamforming as well as the passive beamforming and MEs positions. An iterative AO-based algorithm framework was proposed to decompose the highly-coupled non-convex problem into three tractable subproblems. Specifically, for the BS and ME-STARS beamforming optimization subproblems, the SCA technique was invoked. For the MEs positions optimization subproblem, the GDA was proposed to iteratively adjust MEs positions within the confined region, where the objective function was integrated with the penalty terms. Simulation results demonstrated the effectiveness of the proposed algorithm and the superiority of the proposed ME-STARS-assisted NOMA communications. The performance gain in terms of sum rate compared to the FPE-STARS
	indicated that the dynamically adjustable MEs can further exploit the extra spatial DoFs. Moreover, robust algorithm design is an important topic for future research to mitigate imperfect CSI effects, particularly in dynamic environments where channel estimation errors can degrade system performance.
	
	\begin{appendices} 
		\section{DETAILED DERIVATION OF $\partial f\left( \mathbf{Q},R_k^{\rm min}\right)/\partial \mathbf{q}_n$}
		
		\begin{figure*}[t]
			
			\begin{align}
				\label{L2}
				\upsilon_{j,k}\left( \mathbf{q}_n \right)=&\left|    \sum_{a=1}^{L_{\text{B,S}}} \sum_{b=1}^{L_{S,j}}\left|\omega_{n, j}^a\right|\left|\nu_{k}^b\right| e^{j\left(\frac{2\pi}{\lambda}\left(\rho^a_{\text{S,}k}\left(\mathbf{q}_{n}\right)-\rho^b_{\text{S,in}}\left(\mathbf{q}_{n}\right)\right)+\angle \omega_{n, j}^a+\angle \nu_{k}^b\right)}  +\left|\alpha_{n, k}\right| e^{j \angle \alpha_{n, k}} \right|^2 \notag\\       =&\left(\sum_{a=1}^{L_{\text{B,S}}} \sum_{b=1}^{L_{S,j}}\left|\omega_{n, j}^a\right|\left|\nu_{k}^b\right| \cos\left(\frac{2 \pi}{\lambda}\left(\chi_{a,b} x_n+\psi_{a,b} y_n\right)+\angle \nu_j^a+\angle \omega_{n, j}^b\right)   +\left|\alpha_{n, k}\right| \cos \angle \alpha_{n, k} \right)^2\notag\\
				+&\left(\sum_{a=1}^{L_{\text{B,S}}} \sum_{b=1}^{L_{S,j}}\left|\omega_{n, j}^a\right|\left|\nu_{k}^b\right| \sin\left(\frac{2 \pi}{\lambda}\left(\chi_{a,b} x_n+\psi_{a,b}y_n\right)+\angle \nu_j^a+\angle \omega_{n, j}^b\right)   +\left|\alpha_{n, k}\right| \sin \angle \alpha_{n, k} \right)^2,
				\tag{A.1}
			\end{align}
			\hrulefill
		\end{figure*}
		\begin{figure*}[b]
			\vspace*{0pt} 
			\hrulefill
			\begin{align}
				\label{up}
				c_{j,k}^{a,b}\left(\mathbf{q}_n\right) =\sin \left(\angle \alpha_{n, k}-\left(\frac{2 \pi}{\lambda}\left(\chi_{a,b} x_n+\psi_{a,b} y_n\right)+\angle \nu_k^b+\angle \omega_{n, j}^a\right)\right).
				\tag{A.3}
			\end{align}
			\setcounter{equation}{42}
			\begin{align}
				\label{Xdao2}
				\tag{A.5}        \frac{\partial\left(\sum\limits_{\Omega_i>\Omega_k}\upsilon_{j,i}\left(\mathbf{q}_n\right)\right)}{\partial x_n}=\frac{4 \pi\left|\alpha_{n, i}\right|}{\lambda} \sum\limits_{\Omega_i>\Omega_k}\sum_{a=1}^{L_{\text{B,S}}} \sum_{b=1}^{L_{\text{S}, j}} \chi_{a,b}\left|\omega_{n, j}^a\right|\left|\nu_i^b\right|c_{j,i}^{a,b}\left(\mathbf{q}_n\right).
			\end{align}
			\begin{align}
				\label{Y2}
				\tag{A.6}\frac{\partial\left(\sum\limits_{\Omega_i>\Omega_k}\upsilon_{j,i}\left(\mathbf{q}_n\right)\right)}{\partial y_n}=\frac{4 \pi\left|\alpha_{n, i}\right|}{\lambda} \sum\limits_{\Omega_i>\Omega_k}\sum_{a=1}^{L_{\text{B,S}}} \sum_{b=1}^{L_{\text{S}, j}} \psi_{a,b}\left|\omega_{n, j}^a\right|\left|\nu_i^b\right|c_{j,i}^{a,b}\left(\mathbf{q}_n\right).
			\end{align}
		\end{figure*}
		By denoting $\omega_{n,j}=\left |\omega_{n,j}\right|e^{j \angle \omega_{n,j}}$, $\nu_k=|\nu_k|e^{j \angle \nu_k}$ and $\alpha_{n,k}=\left |\alpha_{n,k}\right|e^{j \angle \alpha_{n,k}}$, $\upsilon_{j,k}\left( \mathbf{q}_n \right)$ can be expressed as in \eqref{L2}, shown at the top of this page, where $\chi_{a,b}=\cos\theta^a_{\text{S},k} \sin\varphi^a_{\text{S},k}-\cos \theta^b_{\text{S,in}} \sin \varphi^b_{\text{S,in}}$ and $\psi_{a,b}=\sin\theta^a_{\text{S},k}-\sin \theta^b_{\text{S,in}}$ are the multiplication factors of  $x_n$  and $y_n$. 
		
		It is obvious that the key to obtain both $\partial r\left(\mathbf{q}_n\right)/\partial \mathbf{q}_n$ and $\partial R_{j \rightarrow k}/\partial \mathbf{q}_n$ is the calculation of $\partial\upsilon_{j,k}\left(\mathbf{q}_n\right)/\partial \mathbf{q}_n$, which is a vector composed of the partial deviation of $\upsilon_{j,k}\left( \mathbf{q}_n \right)$ w.r.t $x_n$ and $y_n$, i.e., $\nabla \upsilon_{j,k}\left( \mathbf{q}_n\right)=\left[ \frac{\partial\upsilon_{j,k}\left( \mathbf{q}_n\right)}{\partial x_n} \frac{\partial\upsilon_{j,k}\left( \mathbf{q}_n\right)}{\partial y_n}\right]^T$. Subsequently,  $\partial\upsilon_{j,k}\left( \mathbf{q}_n \right)/\partial x_n$ can be given as
		
		\begin{align}
			\label{Xdao1}
			\frac{\partial\upsilon_{j,k}\left( \mathbf{q}_n \right)}{\partial x_n}=\frac{4 \pi\left|\alpha_{n, k}\right|}{\lambda} \sum_{a=1}^{L_{\text{B,S}}} \sum_{b=1}^{L_{\text{S}, j}} \chi_{a,b}\left|\omega_{n, j}^a\right|\left|\nu_k^b\right|c_{j,k}^{a,b}\left(\mathbf{q}_n\right),
			\tag{A.2}
		\end{align}
		where $c_{j,k}^{a,b}\left(\mathbf{q}_n\right)$ is given in \eqref{up}. Similarly, $\upsilon_{j,k}\left( \mathbf{q}_n \right)/\partial y_n$ can be given as
		\setcounter{equation}{41}
		\begin{align}
			\label{Y1}
			\tag{A.4}\frac{\partial\upsilon_{j,k}\left( \mathbf{q}_n \right)}{\partial y_n}=\frac{4 \pi\left|\alpha_{n, k}\right|}{\lambda} \sum_{a=1}^{L_{\text{B,S}}} \sum_{b=1}^{L_{\text{S}, j}} \psi_{a,b}\left|\omega_{n, j}^a\right|\left|\nu_k^b\right|c_{j,k}^{a,b}\left(\mathbf{q}_n\right).
		\end{align}
		By combing \eqref{Xdao1} and \eqref{Y1}, we can easily obtain $\partial r\left(\mathbf{q}_n\right)/\partial \mathbf{q}_n$. Accordingly, $\partial\left(\sum\limits_{\Omega_i>\Omega_k}\upsilon_{j,i}\left(\mathbf{q}_n\right)\right)/\partial \mathbf{q}_n$ can be calculated as in \eqref{Xdao2} and \eqref{Y2}, shown at the top of this page. Based on the above equations, the derivation of the gradient vector of $f\left( \mathbf{Q},R_k^{\text{min}}\right)$ at the point $\mathbf{q}_n$ can be obtained. 
	\end{appendices}
	
	\bibliographystyle{ieeetr}
	\bibliography{mybib}
\end{document}